\newtheorem{theorem}{Theorem}[section]
\newcommand{ \xb }{ \mathbf{x} } 
\newcommand{\bfX} {\mathbf{X}}
\newcommand{\bfx} {\mathbf{x}}
\renewcommand{\Pr}{\mathsf{P}}
\newcommand{\F}{\mathsf{F}}
\newcommand{\reals}{\mathbb{R}}
\newcommand{\I}{\mathsf{I}}
\newcommand{\nbd}{\mathsf{nbd}}
\newcommand{\Dir}{\mathsf{Dir}}
\renewcommand{\Pr}{\mathsf{P}}
\newcommand{\Ehat}{\overline{E}}
\DeclareMathOperator*{\di}{\mathrm{d}\!}
\begin{document}

\title{Loglinear model selection and human mobility}
\author{Adrian Dobra}
\address{Department of Statistics, Department of Biobehavioral Nursing and Health Informatics, Center for Statistics and the Social Sciences and Center for Studies in Demography and Ecology, University of Washington, Box 354322, Seattle, WA 98195}
\email{adobra@uw.edu}

\author{Abdolreza Mohammadi}
\address{Department of Methodology and Statistics, Tilburg University, Netherlands}
\email{a.mohammadi@uvt.nl}

\date{\today}                                           

\begin{abstract}
Methods for selecting loglinear models were among Steve Fienberg's research interests since the start of his long and fruitful career. After we dwell upon the string of papers focusing on loglinear models that can be partly attributed to Steve's contributions and influential ideas, we develop a new algorithm for selecting graphical loglinear models that is suitable for analyzing hyper-sparse contingency tables. We show how multi-way contingency tables can be used to represent patterns of human mobility. We analyze a dataset of geolocated tweets from South Africa that comprises $46$ million latitude/longitude locations of $476,601$ Twitter users that is summarized as a contingency table with $214$ variables.\\
KEYWORDS: contingency tables, model selection, human mobility, graphical models, Bayesian structural learning, birth-death processes, pseudo-likelihood
\end{abstract}

\maketitle

\tableofcontents

\section{Introduction}
\label{sec:intro}

Steve Fienberg was one of the founders of modern multivariate categorical data analysis. In two of the books he wrote early in his career \cite{bishop-et-1975,fienberg-1980} he laid out key notation, definitions, modeling techniques, and also open research directions for building approaches for analyzing contingency tables. More than forty years ago, he argued that interactions of various orders among categorical variables are of great interest \--- a fact that is now recognized in the literature from several fields (e.g., biology, social sciences, public health, transportation research). Hierarchical loglinear models that represent log expected cell counts as sums of main effects of variables cross-classified in a table, and interactions of two, three or more of these variables are well suited to capture complex multivariate patterns of dependencies. The selection of the interaction structure in hierarchical loglinear models was a problem Steve discussed in considerable length in \cite[Chapter 9]{bishop-et-1975}, \cite[Chapter 4]{fienberg-1980}, and also in several papers he subsequently published later on in his career.

\cite{fienberg-1970} laid out one of the first strategies for hierarchical loglinear model determination which is based on partitioning the Pearson or the likelihood-ratio goodness-of-fit statistics into several additive parts. Steve's approach starts with a hierarchy of models, and a significance level. Interactions are sequentially added or deleted based on a series of tests that correspond with the partitioned components of the most complex models. The model search stops when the difference between consecutive models is significant. Steve properly recognized that a good model building strategy must walk the fine line between goodness-of-fit and parsimony, that is, including more interactions to obtain a better fit of the data, and leaving fewer interactions in the model to create simpler representations of the association structure. However, this early method for loglinear model selection can compare only models that are nested (i.e., a simpler model is obtained from a more complex one by deleting interactions), and can be successfully used for datasets that involve no more than $5$ variables.

Due in part to Steve's early contributions and ideas, several approaches to selection of loglinear models have started to emerge \cite{edwardshavranek1985,agresti-1990,whittaker1990}, but these methods turned out to be quite ineffective even for contingency tables with $7$ variables. One bottleneck is due to the exponential increase in the number of possible hierarchical loglinear models: while there are 7580 models with $5$ variables, there are about $5.6\times 10^{22}$ models with $8$ variables \cite{dellaportasforster1999}. Moreover, contingency tables that involve a large number of variables are sparse and their non-zero counts are imbalanced. That is, almost all the counts in large tables are zero; most of their positive counts are small (1, 2 or 3), and there are always a few counts that are quite large. Sparsity and imbalance give rise to severe difficulties when performing model selection due to the invalidation of the asymptotic approximations to the null distribution of the generalized likelihood-ratio test statistic, or the non-existence of the maximum likelihood estimates \cite{fienbergrinaldo2007,fienberg-rinaldo-2012}.

The Bayesian paradigm avoids some of these issues through the specification of prior distributions for model parameters \cite{clydegeorge2004}. \cite{dellaportasforster1999} represents a key contribution that proposed a Markov chain Monte Carlo (MCMC) algorithm to identify loglinear models with high posterior probability. Other notable papers develop various stochastic search schemes for discrete data \cite{madiganraftery1994,madiganyork1995,madiganyork1997,tarantola2004,dellaportastarantola2005,dobra-massam-2010}. These methods are known to work well for datasets with no more than $8$ variables. Another approach for Bayesian model selection in contingency tables is called copula Gaussian graphical models \cite{dobra-lenkoski-2011}, and it has successfully been used to analyze a 16-dimensional table. More recently, ultra-sparse high-dimensional contingency tables have been analyzed using probabilistic tensor factorizations induced through a Dirichlet process (DP) mixture model of product multinomial distributions \cite{dunson-xing-2009,canale-dunson-2011,bhattacharya-dunson-2012,kunihama-dunson-2013}. These papers present simulation studies and real-world data examples that involve up to $50$ categorical variables. 

\indent Penalized likelihood methods for categorical data have focused on Markov random fields for binary variables \cite{hoefling-tibshirani-2009,ravikumar-et-2010}. \cite{wainwright-jordan-2008} show that higher-order interactions and variables with three or more categories can be modeled by introducing additional binary variables in the model specification. Such claims have never been tested on known examples; from a theoretical perspective, there is no proof that the extension of the work of \cite{hoefling-tibshirani-2009} or \cite{ravikumar-et-2010} to general multi-way tables preserves the hierarchical structure of loglinear parameters, or yields consistent parameter and model estimates. The group lasso estimator for loglinear models \cite{nardi-rinaldo-2012}, despite having desirable theoretical properties, does not provide guarantees that the hierarchical structure of interaction terms is preserved.

In this paper we introduce a Bayesian framework for loglinear model determination that is suitable for the analysis of a contingency table with $214$ variables. Our method determines graphical loglinear models that are a special type of hierarchical loglinear models. Our key application comes from human mobility. In this context, multivariate categorical data capture the movement of individuals across multiple geographical areas. In Section \ref{sec:mobility} we discuss the relevance of massive unsolicited geolocated data for human mobility research, and in Section \ref{sec:mobilitymodels} we explain the role of loglinear models in modeling human movement. In Section \ref{sec:twitterdata} we describe our collection process of a geolocated Twitter dataset from South Africa; these data are subsequently transformed in the $214$ dimensional contingency table we analyze in Section \ref{sec:twitteranalysis}. Our modeling framework is presented in Section \ref{sec:intro GMs}. In Section \ref{sec:simulation} we provide information about the efficiency of our proposed method in a simulation study. In Section \ref{sec:conclusions} we give some concluding comments.

\section{Research on human mobility}
\label{sec:mobility}

Human mobility, or movement over short or long distances for short or long periods of time, is an important yet under-studied phenomenon in the social and demographic sciences. Migration processes represent a special case of human mobility that involve movements over longer periods of time and over longer distances. The impact of migration on human well-being, macro-social, political, and economic organization is a hot topic in the current literature \cite{Donato-1993,Durand-et-al-1996,Harris-Todaro-1970,Massey-1990,Massey-et-al-1993,Massey-Espinosa-1997,Massey-et-al-2010,Stark-Bloom-1985,Stark-Taylor-1991,Taylor-1987,Todaro-1969,Todaro-Maruszko-1987,VanWey-2005,Williams-2009}. Similar advances in understanding human mobility have been hindered by difficulties in recording and measuring how humans move on a minute and detailed scale. A notable exception is the relatively rich literature focusing on urban mobility and transportation studies. But much of this literature relies on travel surveys which are expensive to collect, have small sample sizes and limited spatial and temporal scales, are updated infrequently, and suffer from recall bias \cite{calabrese-et-al-2013,stopher-greaves-2007,wolf-et-2003}. Until recently, studies of mobility could not benefit from large scale data to widely address how differentials in mobility influence other outcomes. This is quite problematic given that mobility is likely a fundamental factor in behavior and macro-level social change, with potential associations with key issues that face human societies today, including spread of infectious diseases, responses to armed conflict and natural disasters, health behaviors and outcomes, economic, social, and political well-being, and migration.

Massive unsolicited geolocated data from mobile phones have recently become available for the study of human mobility. Such data are continuously collected by social media websites, search engines and wireless-service providers \cite{becker-et-al-2013}. Every time a person makes a voice call, sends a text message, goes online or makes posts through a social media service from their mobile phone, a record is generated with information about the time and day, duration and type of communication, as well as positional information. This could be the exact latitude and longitude of the mobile phone, or an identifier of the cellular tower that handled the request. The approximate spatiotemporal trajectory of a mobile phone and its user can be reconstructed by linking the records associated with that phone. This exciting new type of data holds immense promise for studying human behavior with precision and accuracy on a vast scale never before possible with surveys or other data collection techniques \cite{tatem-2014,dobra-et-2015,williams-et-2015}.  

User communications and check-ins through social media platforms such as Twitter generate publicly-available world-wide databases of human activity that can be readily accessed online free of charge. Recent evidence suggests that Twitter is a reliable source for examining human mobility patterns whose quality is comparable at the ecological level with mobile phone call records \cite{10.1371/journal.pone.0131469}. The cultural role of Twitter which serves a dual role as both a microblog and a social network, is evidenced by the Library of Congress' decision to store a permanent, daily updated archive of the site from its first tweet. Social media offers location sharing services whose growing popularity generate digital traces that can be located in space and time. Each day, Twitter records $7$ million tweets with explicit geolocation (latitude and longitude) information from mobile devices with GPS sensors \cite{Neubauer2015} that represent about 1.6\% of the total number of tweets \cite{leetaru-et-2013}.  The geographic information from geolocated tweets (geotweets) reveals the locations of human settlements and transportation networks \cite{leetaru-et-2013}. As the number of smartphone users continues to rise around the world, especially in low income countries, the potential of geolocated social media data to improve our knowledge of human geography will constantly grow. These are the data we collect and analyze in this paper \--- see Sections \ref{sec:twitterdata} and \ref{sec:twitteranalysis}.

\section{Modeling human mobility}
\label{sec:mobilitymodels}

The majority of the literature on human mobility is concerned with L\`{e}vy flights models and with Markov process models. Let us assume that traveling patterns are observed with respect to $p$ distinct areas or locations $\{1,2,\ldots,p\}$. Denote by $N_{ij}$ the number of individuals that traveled from location $i$ to location $j$ in a given time interval, and by $\Pr_{ij}$ the probability that a random individual will travel to location $j$ given that they are currently at location $i$. A class of stochastic process models called L\`{e}vy flights \cite{Brockmann2006} is one of the most popular way of modeling human mobility, or to model its limits \cite{Gonzalez2008}. This model represents the probability of traveling a distance $d$ as a power law: $\Pr(d) \propto d^{-(1+\beta)}$, where $\beta<2$ is a diffusion parameter. The L\`{e}vy flight model says that traveling a shorter distance is more likely than traveling longer distances, but long-distance travel can still occur even if it is rare. While this assumption is reasonable, the model implies that $\Pr_{ij}$ depends exclusively on $d_{ij}$ \--- the distance between locations $i$ and $j$. This represents a serious limitation since it implies that traveling to destinations that are located at the same distance from an origin is equally likely. A more recent contribution \cite{Guerzhoy2014} builds on multiplicative factor models from social network analysis \cite{Hoff2008} to improve the L\`{e}vy flights model which lacks the ability to quantify the desirability of certain travel locations. They propose a model in which $\Pr_{ij}$ depends of a function $f(d_{ij},\tau)$ of distance $d_{ij}$ and of location-specific latent factors $\mathbf{u}_i\in \reals^q$ and $\mathbf{v}_j\in \reals^q$: $\Pr_{ij} \propto \exp\left( f(d_{ij},\tau) + \mathbf{u}_i^T\mathbf{v}_j\right)$, where $\mathbf{u}_i^T\mathbf{v}_j$ represent the affinity of locations $i$ and $j$. Inference for this latent factor model is performed based on its log-likelihood that is proportional to $-\sum_{i,j} N_{ij}\log P_{ij}$.

Both the L\`{e}vy flights models \cite{Brockmann2006} and the multiplicative latent factor models \cite{Guerzhoy2014} are based on the crude assumption that human travel can be seen as a Markov process in which the probability of traveling to a location depends only on the origin of the trip's segment, and does not depend on previous locations visited. However, individuals are likely to travel repeatedly across multiple locations in a given period of time. Markov process models break mobility trajectories that involve multiple locations into pairs of consecutive locations, and, by doing so, loose key dependencies that are induced by multiple locations being visited by the same individuals in the reference time frame.

Loglinear models also have a long tradition in the human mobility literature, specifically, to estimate flows of migration by origin, destination, age, sex and other categorical sociodemographic variables such as economic activity group \cite{raymer-et-2007,smith-et-2010,raymer-et-2013}. Migration flows are represented as origin-destination migration flow tables. These are square tables in which the rows and columns correspond with places, regions, aggregation of places or countries of interest. The $(i,j)$ cell contains a count of the number of individuals that left from region $i$ and moved to region $j$ over the course of a specified time frame. The inclusion of other categorical variables lead to higher-dimensional migration flow tables. Modeling these tables involves spatial interaction loglinear models of the form \cite{raymer-et-2007}:
$$
 \log(\lambda_{ijk}) = \log(\alpha_i) + \log(\beta_j) + \log(m_{ijk}),
$$
where $\lambda_{ijk}$ is the expected migration flow from origin $i$ to destination $j$ for a combination of levels $k$ of one, two or more additional categorical variables, and $m_{ijk}$ is auxiliary information on the migration flow. The characteristics of the origin $i$ and the destination $j$ are represented through the parameters $\alpha_{i}$ and $\beta_{j}$. However, migration flow tables cannot capture the movement of those individuals that live in more than three regions during the time frame of observation. An example individual that left from region $1$ to move to region $2$, then moved again to region $3$, would contribute with a count of $1$ in the $(1,2)$ and $(2,3)$ cells of the resulting migration flow table. But, the link between these two counts will be lost. For this reason, loglinear models that estimate migration flows suffer from the same shortcoming as Markov process models.

\section{Description of the geolocated Twitter data}
\label{sec:twitterdata}

In this article we analyze a large-scale database of geolocated tweets from South Africa. This sub-Saharan country has been selected due to its high rates of internal and external migration caused by violent internal conflicts, war, political and economical instability, poverty, racial discrimination. Statistics South Africa reports that, in October $2016$, $3.5$ million travelers passed through South Africa's ports of entry. They were made up of $925,796$ South African residents and $2.6$ million foreign travelers. In this country, human mobility is known to be one of the major contributors to the spread of infectious diseases (HIV, tuberculosis, malaria) \cite{tatem-2014,dobra-et-2017}. 

Our geotweets database was put together in a two step process. First, geolocated tweets posted in South Africa between September $2011$ and September 2016 have been obtained directly from Twitter through GNIP, a reseller of social data owned by Twitter, as part of a no-cost collaborative research agreement between the University of Washington and Twitter.  A geotweet is classified to have been posted inside a country based on a country code field derived by GNIP from the latitude and longitude of the tweet. Second, we used the Twitter REST APIs \cite{TwitterRESTAPIs} to obtain geolocated tweets of the $476,601$ users whose geotweets have been captured in the first step. The REST APIs allow access to up to $3,200$ most recent geotweets in each user's timeline irrespective of the time when they have been posted, or the location they have been posted from. For this purpose, we used a customized version of the {\tt smappR} R package \cite{smappR}.  The second data collection step took place continuously between January and December 2016. During this period, the most recent geotweets of each of the $476,601$ users have been retrieved at least twice per month.

The total number of unique geotweets acquired in both steps is $46,210,370$. The actual tweets have been discarded after we extracted tuples of the form {\small{\tt <user key, time of the posting, latitude, longitude, ...>}} from the rich content of each tweet. To assure privacy protection, each Twitter user is identified by a randomly generated key which replaces their Twitter identifier. Additional filtering steps were performed to eliminate any non-human activity (e.g., Twitter robots) or any geotweets with coding errors. We emphasize that this database comprises only public information which can be viewed online, and replicated using the APIs provided by Twitter or downloaded directly from a third party provider of social media data such as GNIP. 

For each of the $476,601$ users, we determined their country of residence. We estimated the amount of time a user spent in a country they visited as the cumulative periods of time between consecutive geotweets posted in that country. A user's country of residence was defined as the country with the largest amount of time spent among all the countries this user tweeted from. Our method for identifying the users' country of residence has certain limitations. First, it is possible that a user could choose to post geotweets only when they are away from their country of residence. Second, it is also possible that our two step process of collecting geotweets might have missed relevant time intervals in which a user tweeted from their country of residence. However, after carefully examining the spatial patterns of geotweets with respect to the estimated countries of residence, we are confident that our method of determination worked fine for a large percentage of users. Based on this information, we classified $41,049$ ($8.62$\%) of the $476,601$ users as visitors of South Africa, and the rest as locals, that is, individuals that most likely see South Africa as their home country. 

\begin{figure} [ht] 
  \begin{center}
    \includegraphics[width=9cm]{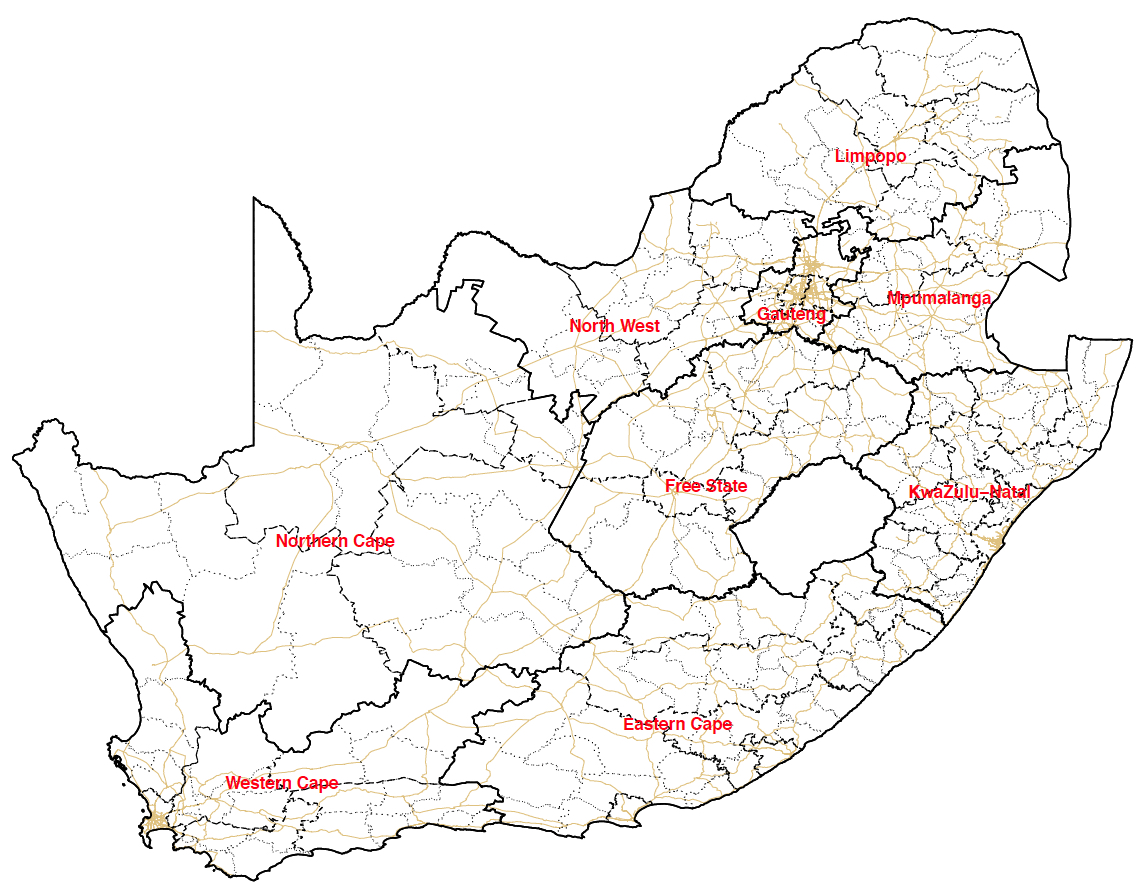}
  \end{center}
  \caption{\label{fig:rsaMap} Administrative divisions of South Africa: nine provinces divided into $52$ metropolitan and district municipalities (dashed lines) that are further divided into $213$ local municipalities (dashed lines). The motorways, trunk, and primary roads (source: \cite{OpenStreetMap}) of the country are also shown.}
\end{figure}

We subsequently mapped the geotweets into the $213$ district municipalities of South Africa \--- see Figure \ref{fig:rsaMap}. This allowed us to determine, for each user, the municipalities they were present and absent during the five years data collection time frame. Here we assume that absence from a municipality is implied by the user not posting any geotweets within its boundaries. These presence/absence patterns together with the Local (yes/no) variable define a $214$ dimensional binary contingency table. This table is hyper-sparse: only $55015$ cells contain positive counts (the logarithm of the percentage of non-zero counts is $-132.813$). Among the $55015$ non-zero counts, there are $46175$ ($83.93$\%) counts of $1$, $3439$ ($6.25$\%) counts of $2$, $1411$ ($2.56$\%) counts of $3$, $747$ ($1.36$\%) counts of $4$, and $476$ ($0.87$\%) counts of $5$. The top five largest counts are $58929$, $42781$, $28731$, $28197$ and $22313$, and represent the number of users that were locals to South Africa and posted geotweets only from one of following five metropolitan municipalities: Johannesburg (JHB, Gauteng), Cape Town (CPT, Western Cape), Tshwane (TSH, Gauteng), eThekwini (ETH, KwaZulu-Natal), and Ekurhuleni (EKU, Gauteng), respectively. The sixth largest count is 9568, and represents the number of users that were locals to South Africa, and posted geotweets from two metropolitan municipalities, Johannesburg (JHB) and Ekurhuleni (EKU). The seventh largest count count is 8464, and represents the number of users that were visitors (non-locals) to South Africa, and posted geotweets only from Johannesburg (JHB). In the next section we present our framework for determining the multivariate patterns of interactions among these 214 binary variables.

\section{Bayesian structural learning in graphical loglinear models}
\label{sec:intro GMs}

An undirected interaction graph $G=(V,E)$ ($V = \{1,...,p\}$ are vertices, and  $E \subset V\times V$ are edges) is defined for a hierarchical loglinear model $\mathcal{H}$ that involves $p$ categorical variables $\bfX = (X_1,X_2,\ldots,X_p)$ as follows. A vertex $i\in V$ of $G$ corresponds with variable $X_i$. An edge $e=(i,j)$ appears in $G$ if and only if the variables $X_i$ and $X_j$ appear together in an interaction term of $\mathcal{H}$. Model $\mathcal{H}$ is graphical if the subsets of $V$ that are the vertices of the complete subgraphs of $G$ that are maximal with respect to inclusion, are also maximal interaction terms in $\mathcal{H}$ \cite{lauritzen1996graphical}. In this case, the absence of an edge between vertices $i$ and $j$ in $G$ means that $X_i$ and $X_j$ are conditional independent given the remaining variables $X_{V\setminus \{i,j\}}$. For this reason, the interaction graph $G$ of a graphical loglinear model is called a conditional independence graph. This graph also has a predictive interpretation. Denote by $\nbd_G(i) = \{ j\in V: (i,j)\in E\}$ the neighbors of vertex $i$ in $G$. Then $X_i$ is conditionally independent of $X_{V\setminus (\nbd_G(i)\cup \{i\})}$ given $X_{\nbd_G(i)}$ which implies that, given $G$, a mean squared optimal prediction of $X_i$ can be made from the neighboring variables $X_{\nbd_G(i)}$.

We focus on the structural learning problem \cite{jones2005experiments,drton-maathuis-2017} which aims to estimate the structure of $G$ (i.e., which edges are present or absent in $E$) from the available data $\xb = (x^{(1)},..., x^{(n)})$. In a Bayesian framework, we explore the posterior distribution of $G$ conditional on the data $\xb$, i.e.
\begin{equation}
 \label{posterior dis}
 \Pr( G \mid \xb ) = \frac{\Pr(G) \Pr(\xb\mid G)} {\sum_{G \in \mathcal{G}_p} \Pr(G) \Pr(\xb\mid G)},
\end{equation}
\noindent where $\Pr(G)$ is a prior distribution on the space $\mathcal{G}_p$ of undirected graphs with $p$ vertices, and $\Pr(\xb\mid G)$ is the marginal likelihood of the data conditional on $G$ \cite{jones2005experiments}. Identifying the graphs with the largest posterior probability \eqref{posterior dis} is a complex problem because the number of undirected graphs $2^{{p \choose 2}}$ in $\mathcal{G}_p$ becomes large very fast as $p$ increases. For example, for $p=20$, the number of undirected graphs in $\mathcal{G}_p$ exceeds $10^{70}$. In this paper we introduce a computationally efficient search algorithm that takes advantage of parallelizable local computations at the vertex level that moves fast towards regions with high posterior  probabilities \eqref{posterior dis}.

\subsection{Bayesian structural learning via birth-death processes}
\label{subsec:graph selection}

To efficiently explore the graph space $\mathcal{G}_p$, \cite{mohammadi2015bayesianStructure} developed the birth-death Markov chain Monte Carlo (BDMCMC) algorithm. This is a trans-dimensional MCMC algorithm, and represents an alternative to the well known reversible jump MCMC algorithm \cite{green1995reversible}. The version of BDMCMC presented in \cite{mohammadi2015bayesianStructure} was developed specifically for Gaussian graphical models. In this section we give a general formulation for sampling from any distributions on a space of graphs $\mathcal{G}_p$.

The BDMCMC algorithm is based on a continuous time birth-death Markov process \cite{preston1976}. Its underlying sampling scheme traverses $\mathcal{G}_p$ by adding and removing edges corresponding to the birth and death events. Given that the process is at state $G=(V,E)$, we define the birth and death events as independent Poisson processes as follows:

\textit{Birth event} \--- each edge $e \in \Ehat$ where $\Ehat = \{ e\in V\times V : e \notin E \}$, is born independently of other edges that do not belong to $G$ as a Poisson process with rate $B_e(G)$. If the birth of edge $e$ occurs, the process jumps to $G^{+e}=(V,E \cup \{e\} )$ which is a graph with one edge more than $G$.

\textit{Death event} \---  each edge $e \in E$ dies independently of other edges that belong to $G$  as a Poisson process with rate $D_e(G)$. If the death of edge $e$ occurs, the process jumps to $G^{-e}=(V,E \setminus \{e\})$ which is a graph with one edge less than $G$.

This birth-death Markov process is a jump process with intensity $\alpha(G) = \sum_{e \in \Ehat}{B_{e}(G)}+\sum_{e \in E}{D_{e}(G)}$. Its waiting time to the next jump follows an exponential distribution with expectation $1/\alpha(G)$. The birth and death probabilities are 
\begin{eqnarray}
 \label{prob.birth}
 \Pr( \mbox{birth of edge } e ) & \varpropto & B_e(G), \quad \mbox{for } e \in \Ehat,\\
 \label{prob.death}
 \Pr( \mbox{death of edge } e ) & \varpropto & D_e(G), \quad \mbox{for } e \in E.
\end{eqnarray}
The following theorem provides sufficient conditions on the birth and death rates to guarantee that the corresponding process on $\mathcal{G}_p$ has stationary distribution (\ref{posterior dis}).
\begin{theorem}
\label{theorem:bd}
 The birth-death process defined by the birth and death probabilities (\ref{prob.birth}) and (\ref{prob.death}) has the stationary distribution $\Pr(G\mid \xb)$  given in (\ref{posterior dis}), if the following detailed balance condition is satisfied:
  \begin{equation}
  \label{eq:bd}
    B_{e}(G) \Pr(G \mid \xb) = D_{e}(G^{+e}) \Pr(G^{+e} \mid \xb),
  \end{equation}
where $e \in \Ehat$, $G=(V,E)$, and $G^{+e}=(V,E \cup \{e\} )$.
\end{theorem}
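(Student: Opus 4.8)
The plan is to treat the birth-death process as a continuous-time Markov jump process on the \emph{finite} state space $\mathcal{G}_p$ and to verify directly that $\Pr(\cdot \mid \xb)$ satisfies the global balance (invariance) equations of its generator. Because $\mathcal{G}_p$ is finite, the existence of a stationary distribution is not at issue; the content of the theorem is that this \emph{particular} distribution is stationary, and for that it suffices to check invariance under the generator. I would first read off the transition rates from the birth and death specifications: from $G = (V,E)$ the process jumps to $G^{+e}$ at rate $B_e(G)$ for each non-edge $e \in \Ehat$, and to $G^{-e}$ at rate $D_e(G)$ for each edge $e \in E$, so the total exit rate is $\alpha(G) = \sum_{e \in \Ehat} B_e(G) + \sum_{e \in E} D_e(G)$, consistent with the intensity stated in the text.

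Next I would write the global balance equation at an arbitrary state $G$. The graphs that can jump into $G$ are exactly the $G^{-e}$ with $e \in E$ (via a birth of $e$, at rate $B_e(G^{-e})$) and the $G^{+e}$ with $e \in \Ehat$ (via a death of $e$, at rate $D_e(G^{+e})$). Writing $\pi := \Pr(\cdot \mid \xb)$, invariance amounts to
\[
 \pi(G)\,\alpha(G) = \sum_{e \in E} \pi(G^{-e})\,B_e(G^{-e}) + \sum_{e \in \Ehat} \pi(G^{+e})\,D_e(G^{+e}).
\]
The key step is to collapse the right-hand side using the detailed balance hypothesis \eqref{eq:bd}. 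Applied as stated it gives $\pi(G^{+e})\,D_e(G^{+e}) = \pi(G)\,B_e(G)$ for every $e \in \Ehat$. Applied instead with $G$ replaced by $G^{-e}$ (so that $e$ is now an absent edge and $(G^{-e})^{+e} = G$), it gives $\pi(G^{-e})\,B_e(G^{-e}) = \pi(G)\,D_e(G)$ for every $e \in E$. Substituting both families of identities turns the right-hand side into $\pi(G)\big[\sum_{e \in E} D_e(G) + \sum_{e \in \Ehat} B_e(G)\big] = \pi(G)\,\alpha(G)$, which is precisely the left-hand side, establishing invariance.

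I expect the only genuine subtlety to be the bookkeeping in this reindexing: one must keep track that \eqref{eq:bd}, phrased for a single edge flip between $G$ and $G^{+e}$, is applied in both directions and at the correct base graph, and that the two sums over $E$ and $\Ehat$ reassemble exactly into $\alpha(G)$. The remaining points are the regularity conditions needed to pass from invariance to the stated conclusion: irreducibility of the single-edge-flip chain on $\mathcal{G}_p$, which is immediate since any graph is reachable from any other by adding or deleting one edge at a time, and non-explosiveness, which is automatic on a finite state space. These could be checked directly or imported wholesale from the birth-death construction of \cite{preston1976}, which already supplies the stationarity characterization through detailed balance in the general setting.
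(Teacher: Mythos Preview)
Your argument is correct. You work directly on the finite state space $\mathcal{G}_p$, read off the generator of the jump process, and show that the edgewise detailed balance hypothesis \eqref{eq:bd}, applied once as written and once with $G$ replaced by $G^{-e}$, collapses the global balance equation at every $G$ to an identity. The reindexing step is handled carefully, and the remarks on irreducibility and non-explosiveness on a finite space are the right closing observations.

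This is, however, a genuinely different route from the paper's. The paper does not argue on $\mathcal{G}_p$ alone: it embeds the process in the augmented state space of pairs $(G,\theta_G)$ with $\theta_G \in \Theta_G$, introduces birth and death transition kernels $K^{(G)}_{B_e}$, $K^{(G)}_{D_e}$ on the parameter component, and invokes Preston's general spatial birth-and-death framework \cite{preston1976}. It then verifies Preston's integral detailed balance conditions \eqref{balance1}--\eqref{balance2} for the joint posterior $\Pr(G,\theta_G\mid\xb)$ and recovers \eqref{eq:bd} only after integrating out $\theta_G$. Your approach is more elementary and entirely self-contained, exploiting the finiteness of $\mathcal{G}_p$ to bypass the measure-theoretic machinery; the paper's approach situates the result inside a general theory that would also cover continuous parameter components, at the cost of introducing objects (birth densities $b_e$, kernels) that play no role in the theorem as stated.
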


\begin{proof}
We take advantage of the theory on general classes of Markov birth-death processes from \cite[Section 7 and 8]{preston1976}. This class of Markov jump processes evolve in jumps which occur a finite number of times in any finite time interval. These jumps are of two types: (i) \textit{birth} in which a single point is added, and the process jumps to a state that contains the additional point; and (ii) \textit{death} in which one of the points in the current state is deleted, and the process jumps to a state with one less point. \cite{preston1976} shows that the process converges to a unique stationary distribution provided that the detailed balance conditions hold. 

To define the balance conditions for our process, assume that at a given time, the process is in a graph state $G=(V,E)$ with $\theta_G \in \Theta_G$ as a vector of parameters. The process is characterized by the {\it birth rates} $B_e(G, \theta_G)$ for each $e \in \Ehat$, the {\it death rates} $D_e(G, \theta_G)$ for each $e \in E$, and the birth and death \textit{transition kernels} $K^{(G)}_{B_e}(\theta_G;\cdot)$ and $K^{(G)}_{D_e}(\theta_G;\cdot)$. Birth and death events occur as independent Poisson processes with rates $B_e(G, \theta_G)$ and $D_e(G,\theta_G)$ respectively. Given the birth of $e \in \Ehat$ occurs, the probability that the following jump leads to a point in $\F \in \Theta_{G^{+e}}$ is 
\begin{eqnarray*}
K^{(G)}_{B_e}(\theta_G; \F) = \frac{B_{e}(G, \theta_G)}{B(G, \theta_G)} \int_{ \theta_e : \theta_G \cup \theta_e \in \F}{b_e(\theta_e; \theta_G) \di \theta_e},
\end{eqnarray*}
in which $B(G, \theta_G)=\sum_{e \in \Ehat} B_e(G, \theta_G)$. Similarly, given the death of $e \in E$ occurs, the probability that the following jump leads to a point in $\F \in \Theta_{G^{-e}}$ is 
\begin{eqnarray}
\label{kernel 2}
K^{(G)}_{D_e}(\theta_G; \F) = \frac{D_{e}(G, \theta_G)}{D(G, \theta_G)} \I(\theta_{G^{-e}} \in \F),
\end{eqnarray}
in which $D(G, \theta_G)=\sum_{e \in E} D_e(G, \theta_G)$. 

For this birth-death process, $\Pr(G, \theta_G\mid \xb)$ satisfies detailed balance conditions if
\begin{eqnarray}
\label{balance1}
\int_{\F} B(G,\theta_G) \Pr(G, \theta_G\mid \xb) \di \theta_{G} = \\ \nonumber
\sum_{e \in \Ehat} \int_{\Theta_{G^{+e}}} D(G^{+e},\theta_{G^{+e}}) K^{(G^{+e})}_{D_e}(\theta_{G^{+e}}; \F) \Pr(G^{+e},\theta_{G^{+e}} \mid \xb) \di \theta_{G^{+e}},
\end{eqnarray}
and
\begin{eqnarray}
\label{balance2}
\int_{\F} D(G,\theta_G) \Pr(G, \theta_G \mid \xb) \di \theta_{G} = \\  \nonumber
\sum_{e \in E} \int_{\Theta_{G^{-e}}} B(G^{-e}, \theta_{G^{-e}}) K^{(G^{-e})}_{B_e}(\theta_{G^{-e}}; \F) \Pr(G^{-e}, \theta_{G^{-e}} \mid \xb) \di \theta_{G^{-e}},
\end{eqnarray}
where $\F \subset \Theta_{G}$. 

We check the first part of the detailed balance conditions \eqref{balance1} as follows
\begin{align*}
LHS &= \int_{\F} B(G, \theta_G) \Pr(G,\theta_G\mid \xb) \di \theta_{G} \\
    &= \int_{\Theta_G} \I(\theta_G \in \F) B(G, \theta_G) \Pr(G,\theta_G\mid \xb) \di \theta_{G} \\
    &= \int_{\Theta_G} \I(\theta_G \in \F) \sum_{e \in \Ehat}{B_{e}(G,\theta_G)} \Pr(G,\theta_G\mid \xb) \di \theta_{G} \\
    &= \sum_{e \in \Ehat} \int_{\Theta_G} \I(\theta_G \in \F) B_{e}(G,\theta_G) \Pr(G,\theta_G\mid \xb) \di \theta_{G} \\
    &= \sum_{e \in \Ehat} \int_{\Theta_G} \I(\theta_G \in \F) B_{e}(G,\theta_G) \Pr(G,\theta_G\mid \xb) \left[ \int_{\Theta_e} b_e(\theta_e; \theta_G) \di\theta_e \right] \di \theta_G \\
    & \qquad \qquad \qquad \qquad \qquad \qquad  \qquad \qquad \qquad \qquad \left[ b_e \text{ must integrate to }1 \right] \\
    &= \sum_{e \in \Ehat} \int_{\Theta_G} \int_{\Theta_e} \I(\theta_G \in \F) B_{e}(G,\theta_G) \Pr(G,\theta_G\mid \xb) b_e(\theta_e; \theta_G) \di \theta_e \di \theta_G. \\
\end{align*}
\begin{align*}
RHS &= \sum_{e \in \Ehat}{ \int_{\Theta_{G^{+e}}} D(G^{+e},\theta_{G^{+e}}) K^{(G^{+e})}_{D_e}(\theta_{G^{+e}}; \F) \Pr(G^{+e},\theta_{G^{+e}} \mid \xb) \di \theta_{G^{+e}} } \\
    &  \qquad \qquad \qquad \qquad \qquad \qquad  \qquad \qquad \qquad \qquad \qquad \left[ \text{equation \eqref{kernel 2} } \right] \\
    &= \sum_{e \in \Ehat}{ \int_{\Theta_{G^{+e}}} \I(\theta_G \in \F) D_{e}(G^{+e}, \theta_{G^{+e}}) \Pr(G^{+e},\theta_{G^{+e}} \mid \xb) \di \theta_{G^{+e}} }.
\end{align*}
Therefore we have LHS=RHS provided that
\begin{eqnarray*}
B_{e}(G,\theta_G) \Pr(G,\theta_G\mid \xb) b_e(\theta_e ; \theta_G) = D_{e}(G^{+e}, \theta_{G^{+e}}) \Pr(G^{+e},\theta_{G^{+e}} \mid \xb).
\end{eqnarray*} 
Now, by integrating over $\theta_{G^{+e}} = \theta_{G} \cup \theta_e$ and knowing that the function $b_e(\theta_e; \theta_{G})$ must integrate to $1$ over $\Theta_e$, we have 
\begin{eqnarray*}
B_{e}(G) \Pr(G \mid \xb) = D_{e}(G^{+e}) \Pr(G^{+e} \mid \xb),
\end{eqnarray*}
which is the expression (\ref{eq:bd}) in Theorem \ref{theorem:bd}. In a similar way, it can be shown that the remaining part of the detailed balance conditions \eqref{balance2} also hold.
\end{proof} 

Based on Theorem \ref{theorem:bd}, we define the birth and death rates of the BDMCMC algorithm as a function of the ratio of the corresponding posterior probabilities to optimize the convergence speed:
\begin{equation*}
  \label{birthrate}
  B_e(G) = \min \left\{ \frac{\Pr(G^{+e} \mid \xb)}{\Pr(G \mid \xb)}, 1 \right\}, \ \ \mbox{for each} \ \ e \in \Ehat,
\end{equation*}
\begin{equation*}
  \label{deathrate}
  D_e(G) = \min \left\{ \frac{\Pr(G^{-e} \mid \xb)}{\Pr(G \mid \xb )}, 1 \right\}, \ \ \mbox{for each} \ \ e \in E.
\end{equation*}
We show the birth and death rates as follows
\begin{equation}
  \label{rate}
  R_e(G) = \min \left\{ \frac{\Pr(G^{*} \mid \xb)}{\Pr(G \mid \xb)}, 1 \right\}, \ \ \mbox{for each} \ \ e \in \{ E \cup \Ehat \},
\end{equation}
where for the birth of edge $e$ we take $G^*=(V,E \cup \{e\} )$, and for the death of edge $e$ we take $G^*=(V,E \setminus \{e\} )$.

Algorithm \ref{algorithm:BDMCMC} provides the pseudo-code for the BDMCMC algorithm which samples from the posterior distribution \eqref{posterior dis} on $\mathcal{G}_p$ by using the above birth-death mechanism. In Section \ref{subsec:bdmcmc} we explain how to efficiently compute the ratio of posterior probabilities in the birth and death rates \eqref{rate} for multivariate discrete data by using the marginal  pseudo-likelihood approach \cite{pensar2016marginal}.

\begin{algorithm}
\renewcommand{\algorithmicrequire}{\textbf{Input:}}
\renewcommand{\algorithmicensure}{\textbf{Output:}} 
\caption{. BDMCMC algorithm for undirected graphical models}
\label{algorithm:BDMCMC}
\begin{algorithmic}[0]
\REQUIRE A graph $G=(V,E)$ with $p$ nodes and data $\xb$
\FOR {$N$ iterations}
  \FOR {all the possible edges in parallel}
    \STATE Calculate the birth and death rates in (\ref{rate}),
  \ENDFOR
  \STATE Calculate the waiting time for $G$ by $W(G)= \frac{1}{\sum_{e \in \Ehat}{B_{e}(G)}+\sum_{e \in E}{D_{e}(G)}} $
  \STATE Update $G$ based on birth/death probabilities in (\ref{prob.birth}) and (\ref{prob.death})
\ENDFOR
\ENSURE Samples from the posterior distribution \eqref{posterior dis}. 
\end{algorithmic}
\end{algorithm}
\begin{figure} [!ht] 
\centering
\includegraphics[width=1\textwidth]{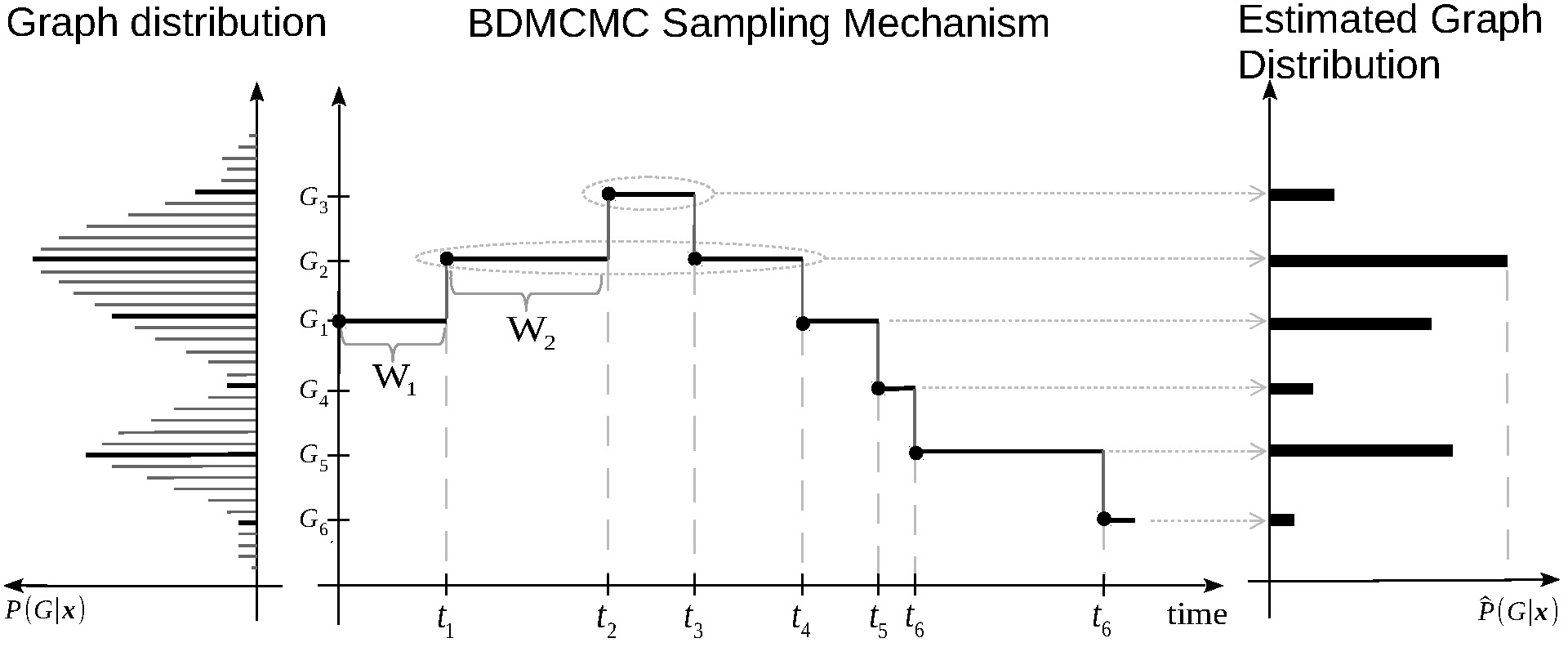}
\caption{ \label{fig:BDMCMC} 
The left and right panels show the true and estimated posterior distribution \eqref{posterior dis} on the space the graphs. The middle panel shows an example output from an application of Algorithm \ref{algorithm:BDMCMC} where $ \left\{ W_1, W_2,... \right\}$ denote 
waiting times, and $ \left\{ t_1, t_2,... \right\}$ denote jumping times.}
\end{figure}

\subsection{Posterior estimation via sampling in continuous time}
\label{subsec:estimation}

Figure \ref{algorithm:BDMCMC} illustrates how the output of Algorithm \ref{algorithm:BDMCMC} can be used to estimate posterior quantities of interest. The output consists of a set of sampled graphs, a set of waiting times $\left\{ W_1, W_2,... \right\}$, and a set of jumping times $\left\{ t_1, t_2,... \right\}$. Based on the Rao-Blackwellized estimator \cite{cappe2003reversible}, the estimated posterior probability of each sampled graph is proportional to the expectation of length of the holding time in that graph which is estimated as the sum of the waiting times in that graph. The posterior inclusion probability of an edge $e\in V\times V$ is estimated by
\begin{eqnarray}
\label{posterior-edge}
\widehat{\Pr}(\mbox{edge } e \mid \xb )= \frac{\sum_{t=1}^{N}{\I(e \in G^{(t)}) W(G^{(t)}) }}{\sum_{t=1}^{N}{W(G^{(t)})}},
\end{eqnarray}
where $N$ denotes the number of iterations, $\I(e \in G^{(t)})$ denotes an indicator function: $\I(e \in G^{(t)})=1$ if $e \in G^{(t)}$, and 
$0$ otherwise. 

\subsection{Birth and death rates with the marginal pseudo-likelihood}
\label{subsec:bdmcmc}

We assume that the observed random variables $\bfX = (X_1,X_2,\ldots,X_p)$ are categorical, with each variable $X_i$ taking values in a discrete set $\mathcal{X}_i=\{1, 2,...,r_i\}$. The determination of the birth and death rates \eqref{rate} involves the marginal likelihood conditional on a graph $G\in \mathcal{G}_p$:
\begin{equation}
\label{eq:pseudo-likelihood}
 \Pr( \xb \mid G ) = \int_{\Theta_G} \Pr( \xb \mid \theta_G, G ) \Pr( \theta_G \mid G ) \di \theta_G,
\end{equation}
where $\theta_G \in \Theta_G$ are the parameters of a multivariate model associated with $G$, $\Pr( \theta_G \mid G )$ is prior for $\theta_G$, and $\Pr(\xb\mid \theta_G, G )$ is the full likelihood function. However, the exact calculation of the marginal likelihood $\Pr( \xb\mid G )$ is possible only for decomposable graphs $G$ which represent a small fraction of the graphs in $\mathcal{G}_p$ \cite{massamliudobra2009}. Numerical approximations for the marginal likelihood for arbitrary undirected graphs have been developed \cite{dobra-massam-2010}, but their application is  computationally expensive even for datasets that involve $p<20$ variables. This high computational effort renders them inapplicable for the Twitter mobility data described in Section \ref{sec:twitterdata} with $p=214$ observed variables.

A computationally cheaper alternative comes from approximating the full likelihood $\Pr(\xb\mid \theta_G, G )$ with the pseudo-likelihood \cite{besag1975statistical, besag1977efficiency} which is the product of the full conditionals of the random variables $\bfX$ given their neighbors in $G$:
\begin{equation}
\label{eq:pseudolikfirst}
\Pr_{pl}( \xb\mid \theta^{pl}_G, G ) = \prod_{d=1}^n\prod_{i=1}^p  \Pr(X_i=x^{(d)}_i\mid \bfX_{\nbd_G(i)}=\bfx^{(d)}_{\nbd_G(i)}, \theta^{pl}_{i,G}).
\end{equation}
\noindent We denote  $\mathcal{X}_{\nbd_G(i)} = \bigtimes_{j\in \nbd_G(i)}\mathcal{X}_j$, $\theta_{i,+ l} = \{ \theta_{i,kl}:k\in \mathcal{X}_i\}$, and $\theta^{pl}_{i,G} = \{ \theta_{i,+ l}: l \in  \mathcal{X}_{\nbd_G(i)}\}$. In \eqref{eq:pseudolikfirst}, $\theta^{pl}_G = \bigtimes_{i=1}^p \theta^{pl}_{i,G}\in \Theta^{pl}_G$ are the set of parameters of the full conditionals
\begin{equation*}
 \Pr( X_i = k\mid \bfX_{\nbd_G(i)} = l ) = \theta_{i,kl},  \quad \textrm{for } i = 1,...,p,
\end{equation*}
where $k\in \mathcal{X}_i$, $l\in \mathcal{X}_{\nbd_G(i)}$. Thus, the pseudo-likelihood \eqref{eq:pseudolikfirst} can be written as:
\begin{equation}
\label{eq:pseudolik}
\Pr_{pl}( \xb\mid \theta^{pl}_G, G ) = \prod_{i=1}^p \prod_{k\in \mathcal{X}_i}\prod_{l\in \mathcal{X}_{\nbd_G(i)}} \theta_{i,kl}^{n_{i,kl}},
\end{equation}
where $n_{i,kl}$ represents the number of samples $x^{(d)}$, $d=1,2,\ldots,n$, such that $x^{(d)}_i = k$ and $x^{(d)}_{\nbd_G(i)}=l$.

For computational convenience, we assume that the set of parameters $\theta^{pl}_{i,G}$ and $\theta^{pl}_{i^{\prime},G}$ associated with the full conditionals of $X_i$ and $X_{i^{\prime}}$, $i\ne  i^{\prime}$ are independent. This assumption is certainly not consistent with the assumption that the full conditionals are derived from the same full joint distribution of $\bfX$. Nevertheless, the approximation of the full likelihood with the pseudo-likelihood \eqref{eq:pseudo-likelihood} is based on the same premise \cite{besag1975statistical, besag1977efficiency}. We also assume that, within the same full conditional associated with the variable $X_i$, the parameters $\theta_{i,+ l}$ and $\theta_{i,+ l^{\prime}}$ associated with the different levels $l$ and $l^{\prime}$ of the variables $\bfX_{\nbd_G(i)}$ are independent \cite{pensar2016marginal}. We impose a prior for $\theta^{pl}_G$ that factorizes according to these two assumptions:
\begin{equation}
\label{eq:pseudoprior}
\Pr( \theta^{pl}_G ) = \prod_{i=1}^p \Pr( \theta_{i,G} ) = \prod_{i=1}^p \prod_{l\in \mathcal{X}_{\nbd_G(i)}} \Pr( \theta_{i,+ l} ).
\end{equation}
Furthermore, we impose a Dirichlet prior on the conditional probabilities of $X_i$ at level $l$ of $\bfX_{\nbd_G(i)}$:
\begin{equation}
 \label{eq:dirichletprior}
 \theta_{i,+ l} \sim \Dir(\alpha_{i,1 l},\ldots,\alpha_{i,r_i l}).
\end{equation}
From \eqref{eq:pseudolik}, \eqref{eq:pseudoprior}, and \eqref{eq:dirichletprior}, it follows that the marginal pseudo-likelihood is \cite{pensar2016marginal}:
\begin{equation}
\label{eq:mpl}
 \Pr_{pl}( \xb \mid G ) = \prod_{i=1}^p \Pr( \bfx_i\mid \bfx_{\nbd_G(i)}),
\end{equation}
with
\begin{equation}
 \Pr( \bfx_i\mid \bfx_{\nbd_G(i)}) = \prod_{l\in \mathcal{X}_{\nbd_G(i)}} \frac{\Gamma \left( \alpha_{i,+l}\right) }{ \Gamma \left( \alpha_{i,+l} + n_{i,+l} \right) } 
\prod_{k\in \mathcal{X}_i} \frac{\Gamma \left( \alpha_{i,kl} + n_{i,kl} \right) }{ \Gamma \left( \alpha_{i,kl} \right) }, \label{eq:marklikreg}
\end{equation}
where $\alpha_{i,+l} = \sum_{k\in \mathcal{X}_i} \alpha_{i,kl}$ and $n_{i,+l} = \sum_{k\in \mathcal{X}_i} n_{i,kl}$.

A prior on the space of graphs $\mathcal{G}_p$ that encourages sparsity by penalizing for the inclusion of additional edges in the graph $G=(V,E)$ is \cite{jones2005experiments}:
\begin{equation}
 \label{eq:graphprior}
 \Pr(G) \propto \left(\frac{\beta}{1-\beta}\right)^{\mid E \mid} = \left(\prod_{i=1}^p \left(\frac{\beta}{1-\beta}\right)^{\mid \nbd_G(i) \mid}\right)^{1/2},
\end{equation}
where $\beta\in (0,1)$ is set to a small value, e.g. $\beta=1/{p \choose 2}$. While other priors on $\mathcal{G}_p$ are available \cite{dobra-et-2011}, the prior \eqref{eq:graphprior} can be decomposed as the product of independent priors for the $p$ full conditionals given $G$ such that the probability of inclusion of a vertex in each of these conditionals is equal with $\beta$ as shown in \eqref{eq:graphprior}.

The marginal posterior distribution on $\mathcal{G}_p$ based on the marginal pseudo-likelihood \eqref{eq:mpl} and the prior on $\mathcal{G}_p$ \eqref{eq:graphprior} is
\begin{equation}
\label{eq:postmpl}
 \Pr_{pl}( G \mid \xb ) \propto \Pr_{pl}( \xb \mid G ) \Pr(G) = \prod_{i=1}^p \Pr( \bfx_i\mid \bfx_{\nbd_G(i)}) \left(\frac{\beta}{1-\beta}\right)^{\frac{\mid \nbd_G(i) \mid}{2}}.
\end{equation}

The birth and death rates in \eqref{rate} based on the marginal pseudo-likelihood for an edge $e=(i,j)\in V\times V$ are calculated from
\begin{equation*}
 \label{rate-mpl}
 \widehat{R}_e(G) = \min \left\{ \frac{\Pr(\bfx_i \mid \bfx_{\nbd_{G^*}(i)})}{\Pr( \bfx_i\mid \bfx_{\nbd_{G}(i)})}  \frac{\Pr(\bfx_j \mid \bfx_{\nbd_{G^*}(j)})}{\Pr( \bfx_j\mid \bfx_{\nbd_{G}(j)})}  \left(\frac{\beta}{1-\beta}\right)^{\delta}, 1 \right\},
\end{equation*}
where for the birth of edge $e$ we take $G^*=(V,E \cup \{e\} )$, $\delta =1$, and for the death of edge $e$ we take $G^*=(V,E \setminus \{e\} )$, $\delta =-1$.

\subsection{Speeding up the BDMCMC algorithm}
\label{subsec:speedup}

The key bottleneck of the BDMCMC algorithm is the computation at every iteration of the birth and death rates \eqref{rate} for all the $p(p-1)/2$ possible edges. Fortunately, the rates associated with one edge can be calculated independently of the rates associated with the other edges, and can be performed in parallel which represents a first key computational improvement. We implemented parallel computations of the birth and death rates in the current version of the \texttt{R} package \texttt{BDgraph} \cite{BDgraph} using \texttt{OpenMP} \cite{openmp08}. Most code in this package is written in \texttt{C++} and interfaced in \texttt{R}.

A second key computational improvement is possible when the marginal likelihood is replaced with the marginal pseudo-likelihood as detailed in Section \ref{subsec:bdmcmc}. Since at each step of the BDMCMC algorithm one edge $e=(i,j)$ is selected for addition or removal, only the marginal likelihood  \eqref{eq:marklikreg} of the full conditionals of the two vertices $i$ and $j$ will change. Thus, we need to recalculate the $(p-1) + (p-1) - 1 = 2p-3$ rates that correspond with these two vertices. The remaining rates will stay the same. As such, at each iteration we update $2p-3$ rates instead of $p(p-1)/2$ rates. This represents a huge computational saving especially for graphs with many vertices. For example, for the Twitter mobility data we analyze in Section \ref{sec:twitteranalysis}, we look at graphs with $p=214$ vertices. Instead of computing $22791$ rates at each step of the BDMCMC algorithm, we only need to determine $422$ rates which means that a single edge update can be done approximately 54 times faster.

A third key computational improvement comes from allowing multiple edge updates at each iteration. The vast majority of the MCMC and stochastic search algorithms that have been developed in the Bayesian graphical models literature  are based on adding or removing one edge at each iteration \cite{jones2005experiments, lenkoski2011computational, scott2008feature, lenkoski2011computational, wang2012efficient, mohammadi2017bayesian, mohammadi2015bayesianStructure, mohammadi2017ratio, cheng2012hierarchical}. These single edge updates are in part responsible for making these structural learning algorithms quite slow for datasets that comprise a larger number of variables $p$. Multiple birth-death sampling approaches have been used to address image processing problems that aim to detect a configuration of objects from a digital image, and have been found to outperform the convergence speed of competing reversible jump MCMC algorithms \cite{descombes2009object, eldin2012multiple, gamal2010multiple, gamal2011fast}.

By following this idea, it is possible to transform Algorithm \ref{algorithm:BDMCMC} into a multiple birth-death MCMC algorithm based on a multiple birth-death process. At each iteration, after computing and ranking the birth and deaths rates \eqref{rate}, we select not one but a fixed number $N_0\ge 2$ of edges to be added or removed from the graph. By doing so, $N_0$ edges are updated at no computational cost compared to a single edge update. Through multiple edge updates which we have also implemented in the \texttt{R} package \texttt{BDgraph} \cite{BDgraph}, the BDMCMC algorithm can quickly move to regions with high posterior probability in the graph space $\mathcal{G}_p$. The ability to move towards high posterior probability graphs in a small number of iterations is especially important in applications in which the ratio between the number of samples available and the number of variables is small. However, performing multiple edge updates at each iteration of the BDMCMC algorithm does not have any theoretical guarantees related to sampling from the correct target posterior distribution \eqref{posterior dis}. Multiple edge updates in can be performed for a reduced number of iterations to identify several graphs that have higher posterior probabilities compared to the empty graph, the full graph or a random graph sampled from $\mathcal{G}_p$. These graphs can be subsequently used as starting points for Algorithm \ref{algorithm:BDMCMC} with single edge updates.

\section{Simulation Study}
\label{sec:simulation}

We investigate the performance of the BDMCMC algorithm in recovering the graph structure from categorical data by comparing it to the hill-climbing (HC) algorithm proposed by \cite{pensar2016marginal}. While the BDMCMC algorithm samples from the marginal posterior distribution \eqref{eq:postmpl}, the HC algorithm solves the optimization problem $\max\{ \Pr_{pl}( G \mid \xb ) : G\in \mathcal{G}_p\}$ using a method that involves two phases.

We consider three types of graphs (see Figure \ref{fig:plot_graphs}):
\begin{itemize}
\item[1.] \textbf{Random:}  A graph in which edges were randomly generated from the prior \eqref{eq:graphprior} with $\beta = 0.4$.
\item[2.] \textbf{Cluster:} A graph with two clusters (connected components) each with $p=5$ vertices. The edges in both clusters were randomly generated from the prior \eqref{eq:graphprior} with $\beta = 0.6$.
 \item[3.] \textbf{Scale-free:} A graph sampled from a power-law degree distribution with the Barab\'asi-Albert algorithm \cite{albert2002statistical}.
 \end{itemize}
We also consider graphs with $p=20$ vertices that have two connected components with $10$ vertices and the same edge structure of type ``Random'', ``Cluster'', or ``Scale-free''. We simulated binary contingency tables with $p\in \{10,20\}$ variables that comprise $n \in \{ 200, 500, 1000 \}$ samples from random graphs of these three types. We repeated the simulation experiment that involves the generation of $18$ contingency tables $50$ times. We performed all computations with the {\tt R} package {\tt BDgraph} \cite{BDgraph, mohammadi2015bdgraph}. For each contingency table we generated, we run the BDMCMC and the HC algorithms using a uniform prior on the graph space $\mathcal{G}_p$ (the equivalent of setting $\beta = 0.5$ in \eqref{eq:graphprior}) starting from the empty graph. The BDMCMC algorithm was run for $100,000$ iterations. The first $60,000$ iterations were discarded as burn-in.

We estimated the structure of the true graph based on model averaging \cite{madigan1996bayesian} of the graphs sampled by the BDMCMC algorithm. We calculate the posterior inclusion probabilities of edges \eqref{posterior-edge}, and determine the median graph whose edges have posterior inclusion probabilities greater than $0.5$. The structure of the true graph was estimated with the HC algorithm based on the ``and" and the ``or" criteria in the first phase of the algorithm \cite{pensar2016marginal}. 

\begin{figure} [!ht] 
\centering
\subfigure[Random]{
   \centering
   \includegraphics[width=0.3\textwidth]{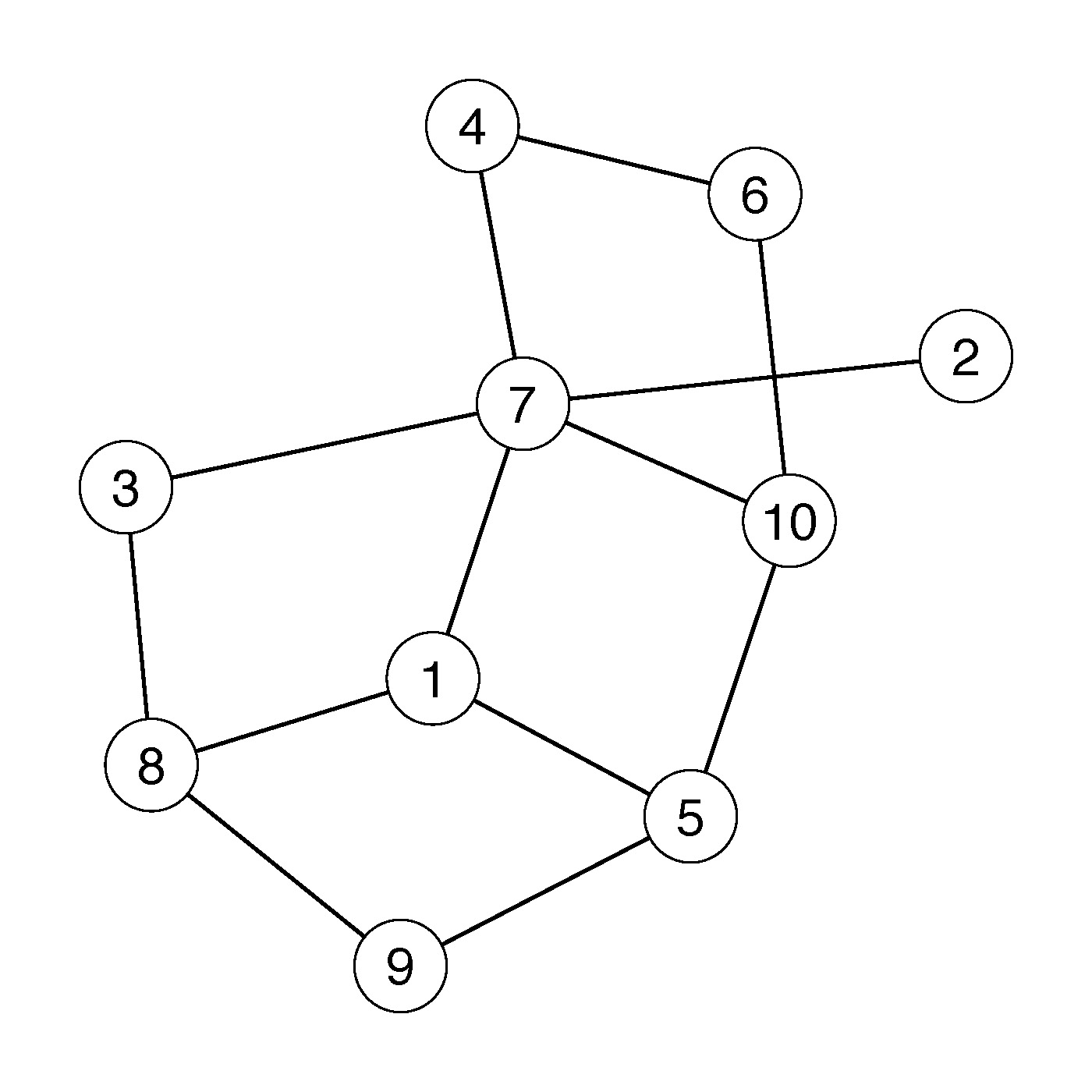}
   \label{fig:Random}
}
\centering
\subfigure[Cluster]{
   \centering
   \includegraphics[width=0.3\textwidth]{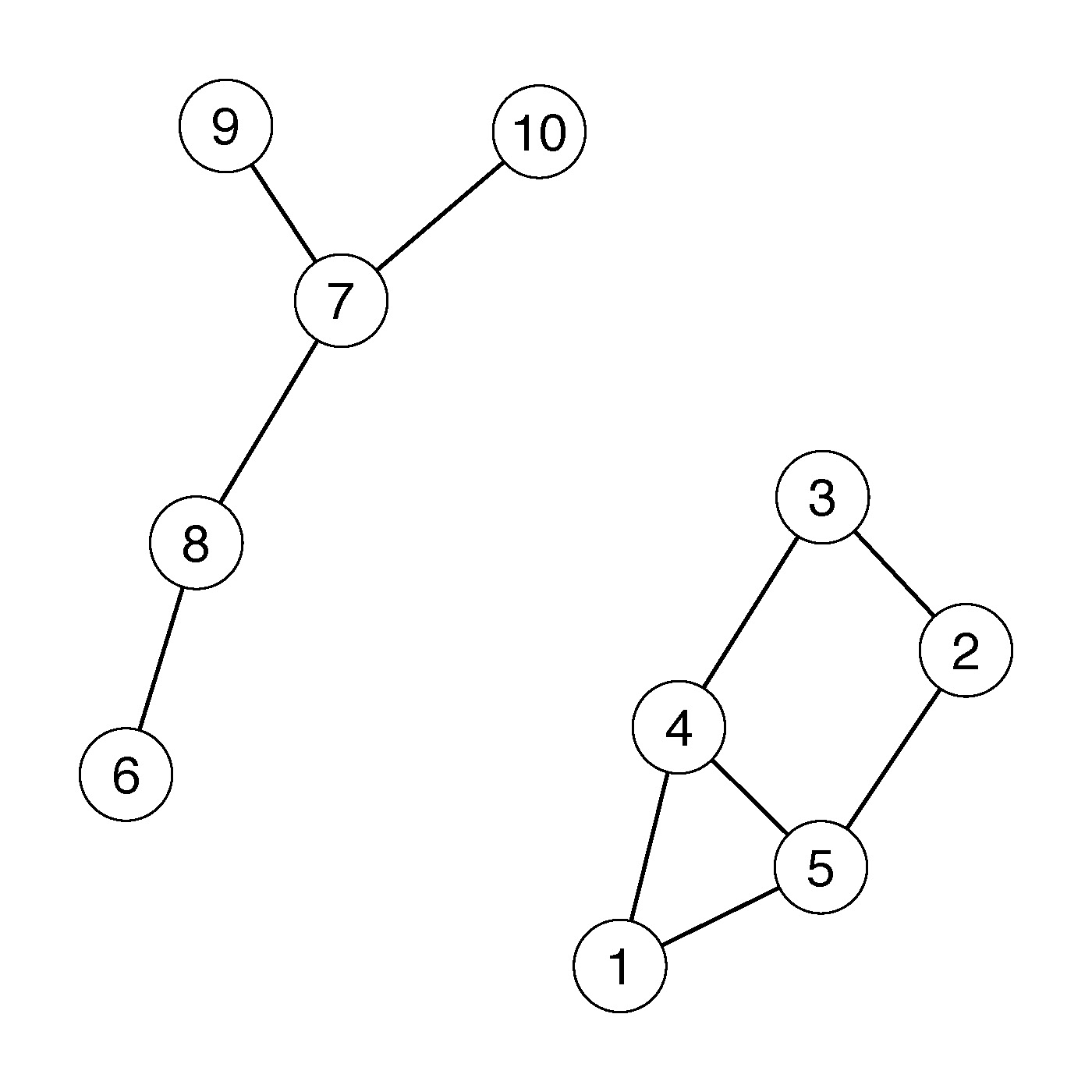}
   \label{fig:Cluster}
}
\centering
\subfigure[Scale-free]{
   \centering
   \includegraphics[width=0.3\textwidth]{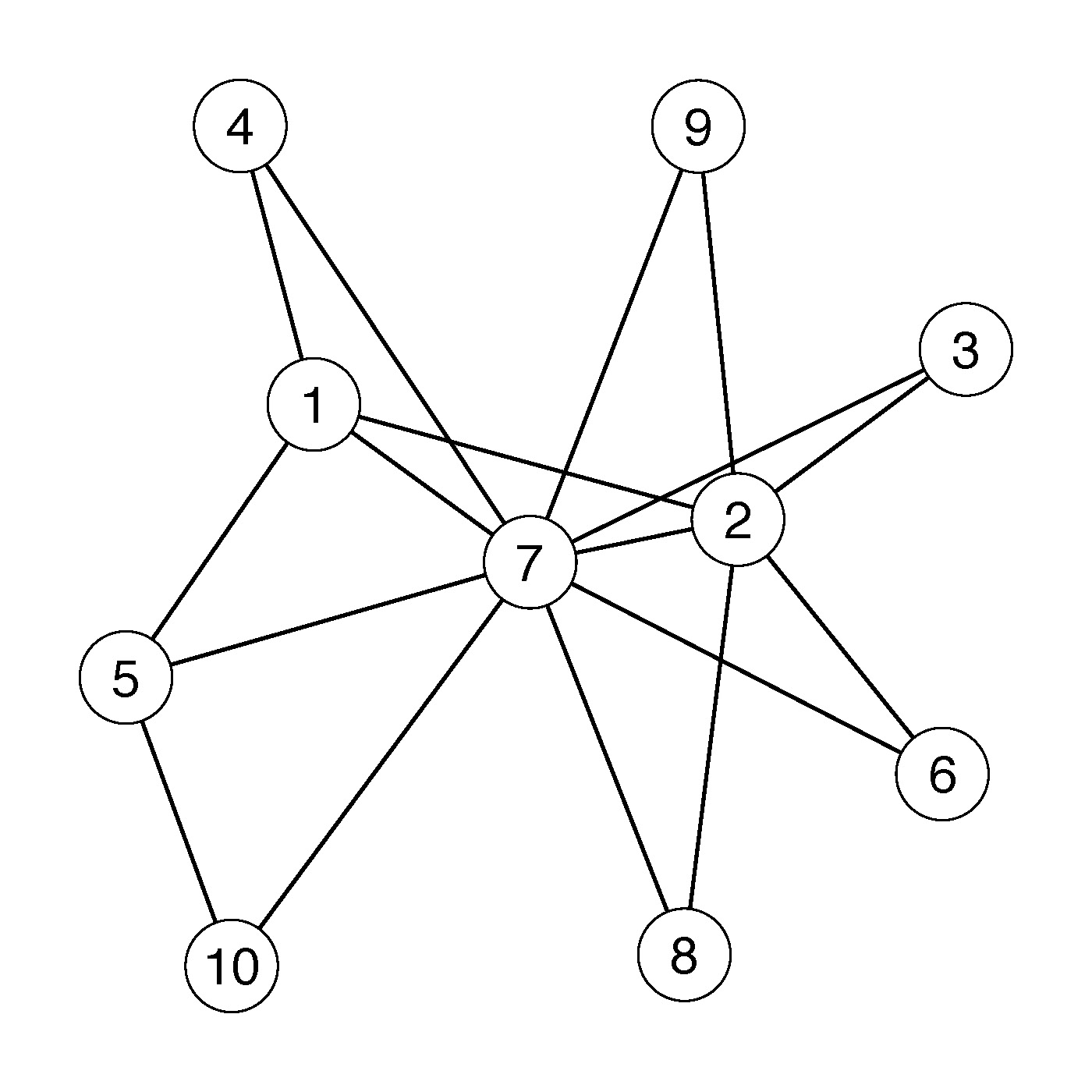}
   \label{fig:Scale-free}
}
\caption[]{Example graphs with $p=10$ vertices used in the simulation study from Section \ref{sec:simulation}.}
\label{fig:plot_graphs}
\end{figure}

We evaluate the performance of the two algorithms in recovering the structure of the true graphs using the $F_1$-score measure \cite{baldi2000assessing},  
\begin{eqnarray}
\label{f1}
F_1\mbox{-score} = \frac{2 \mbox{TP}}{2 \mbox{TP + FP + FN}},
\end{eqnarray}
\noindent and the Structure Hamming distance (SHD) \cite{tsamardinos2006max},
\begin{eqnarray}
\label{SHD}`
\mbox{SHD} = \mbox{FP + FN},
\end{eqnarray}
\noindent where TP, TN, FP and FN are the number of true positives, true negatives, false positives and false negatives, respectively. The values of the $F_1$-score range between $0$ and $1$, and the values of the SHD are positive. A better performance in recovering the true graph is associated with larger values of the $F_1$-score, and with smaller values of the SHD.

\begin{table*}[!ht] \footnotesize 
\vspace{0em}
\caption{ \label{table:measures}
Results from the simulation study from Section \ref{sec:simulation}. Binary tables with  $p\in \{10,20\}$ variables and  $n\in \{200,500,1000\}$ samples were generated from graphical models defined by three types of graphs: ``Random'', ``Cluster'', and ``Scale-free''. This table reports means and standard deviations of the $F_1$-score \eqref{f1} and SHD \eqref{SHD} measures for the accuracy in recovering the structure of the true graph across $50$ replicate binary tables generated for every combination of $p$, $n$ and graph type. The best performing models in terms of the $F_1$-score and SHD are shown in boldface.}
\centering
\begin{tabular}{l*{9}{l}l}  
\toprule
                    &      & \multicolumn{3}{c}{$F_1$-score}    & \multicolumn{4}{c}{SHD}      \\
                                                          \cmidrule{3-5}                       \cmidrule{7-9}               
 p                  & n    & BDMCMC   & HC(or)    & HC(and)      && BDMCMC & HC(or)  & HC(and)        \\
\hline   
                                                                                                                                           \\
\multicolumn{9}{c}{ Random }                                                                                                               \\
\\
\multirow{3}{*}{10} & 200   & \textbf{0.7}(0.12)  & 0.68(0.11)         & 0.57(0.11)  && \textbf{8.2}(2.9) & 8.7(2.9)         & 10.6(3)       \\  
                    & 500   & \textbf{0.8}(0.1)   & 0.78(0.1)          & 0.7(0.1)    && \textbf{5.8}(2.5) & 6.2(2.6)         & 8(2.6)        \\  
                    & 1000  & \textbf{0.87}(0.08) & 0.86(0.08)         & 0.8(0.09)   && \textbf{3.9}(2.2) & 4(2.1)           & 5.6(2.2)      \\  
\\
\multirow{3}{*}{20} & 200   & \textbf{0.7}(0.08)  & 0.69(0.07)         & 0.58(0.09)  && 17.5(4.6)        & \textbf{17.3}(3.9)& 21.5(5.7)     \\  
                    & 500   & \textbf{0.8}(0.08)  & 0.79(0.08)         & 0.7(0.09)   && \textbf{11.9}(4.2)& 12.5(4.5)        & 16.6(4.4)     \\  
                    & 1000  & \textbf{0.85}(0.07) & \textbf{0.85}(0.07) & 0.78(0.08)  && \textbf{8.9}(4)   & 9.3(3.8)         & 12.6(4)       \\  
                                                                                                                                           \\
\multicolumn{9}{c}{ Cluster }                                                                                                              \\
                                                                                                                                           \\
\multirow{3}{*}{10} & 200   & \textbf{0.76}(0.13) & 0.75(0.13)         & 0.66(0.14)  && \textbf{4.5}(2)   & 4.6(1.8)         & 5.8(2)        \\  
                    & 500   & \textbf{0.86}(0.1)  & 0.83(0.13)         & 0.75(0.15)  && \textbf{2.7}(1.7) & 3.3(2.1)         & 4.6(2.4)      \\  
                    & 1000  & \textbf{0.91}(0.09) & 0.9(0.09)          & 0.87(0.09)  && \textbf{1.7}(1)   & 1.8(1.1)         & 2.4(1.3)      \\  
\\
\multirow{3}{*}{20} & 200   & 0.69(0.13)         & \textbf{0.71}(0.11) & 0.63(0.14)  && 14.8(10.2)       & \textbf{11.7}(4.2)& 13.5(4.9)     \\  
                    & 500   & \textbf{0.86}(0.08) & 0.84(0.08)         & 0.76(0.11)  && \textbf{5.9}(2.8) & 6.7(3.2)         & 9.5(4.1)      \\  
                    & 1000  & \textbf{0.93}(0.06) & 0.92(0.06)         & 0.87(0.08)  && \textbf{3.3}(2.7) & 3.6(2.7)         & 5.9(2.9)      \\  
                                                                                                                                           \\
\multicolumn{9}{c}{ Scale-free }                                                                                                           \\
                                                                                                                                           \\
\multirow{3}{*}{10} & 200   & \textbf{0.67}(0.12) & 0.66(0.1)          & 0.56(0.1)   && \textbf{8.5}(2.6) & \textbf{8.5}(2.1) & 10(1.9)       \\  
                    & 500   & \textbf{0.73}(0.11) & \textbf{0.73}(0.11) & 0.62(0.12)  && \textbf{6.9}(2.4) & \textbf{6.9}(2.2) & 8.8(2.2)      \\  
                    & 1000  & 0.8(0.07)          & \textbf{0.81}(0.08) & 0.7(0.1)    && 5.3(1.7)         & \textbf{5.2}(1.9) & 7.4(1.9)      \\  
\\
\multirow{3}{*}{20} & 200   & \textbf{0.63}(0.13) & \textbf{0.63}(0.13) & 0.53(0.09)  && 21.3(12.7)       & \textbf{19.1}(5.6)& 21.4(3.7)     \\  
                    & 500   & \textbf{0.74}(0.08) & \textbf{0.74}(0.09) & 0.63(0.08)  && 14(3.6)          & \textbf{13.8}(3.8)& 17.7(3.2)     \\  
                    & 1000  & \textbf{0.78}(0.09) & \textbf{0.78}(0.09) & 0.7(0.09)   && 11.8(4.3)        & \textbf{11.4}(4.3)& 14.7(3.9)     \\  
\bottomrule
\end{tabular}
\end{table*}

The results are summarized in Table \ref{table:measures}. For most simulation experiments, the BDMCMC algorithm has an advantage over the HC algorithm especially for the $F_1$-score. ROC curves showing the performance of the BDMCMC algorithm are presented in Figure \ref{fig:Rocplot}.
\begin{figure} [ht]
\centering
    \includegraphics[width=12cm,height=5cm]{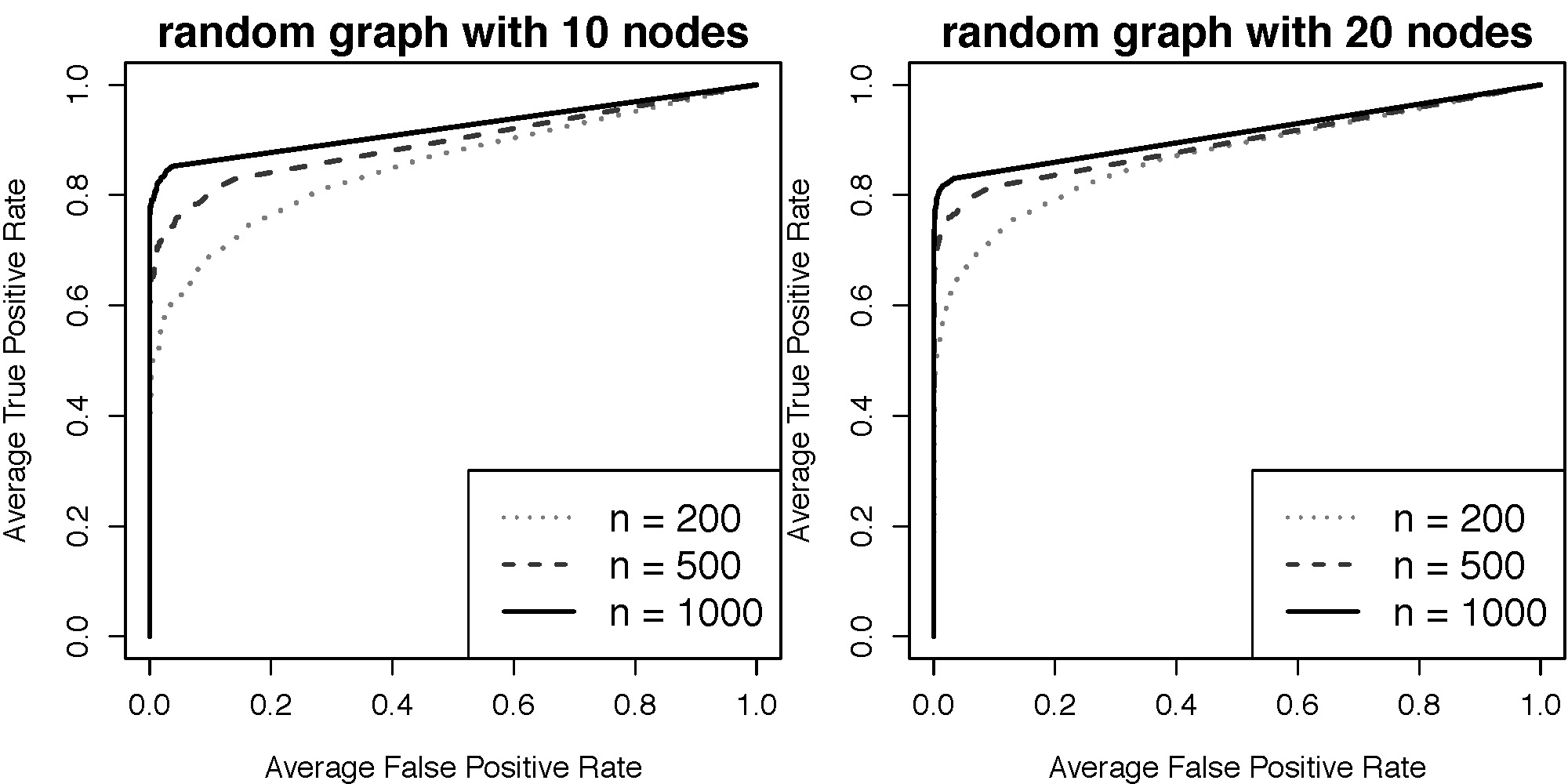} 
    \includegraphics[width=12cm,height=5cm]{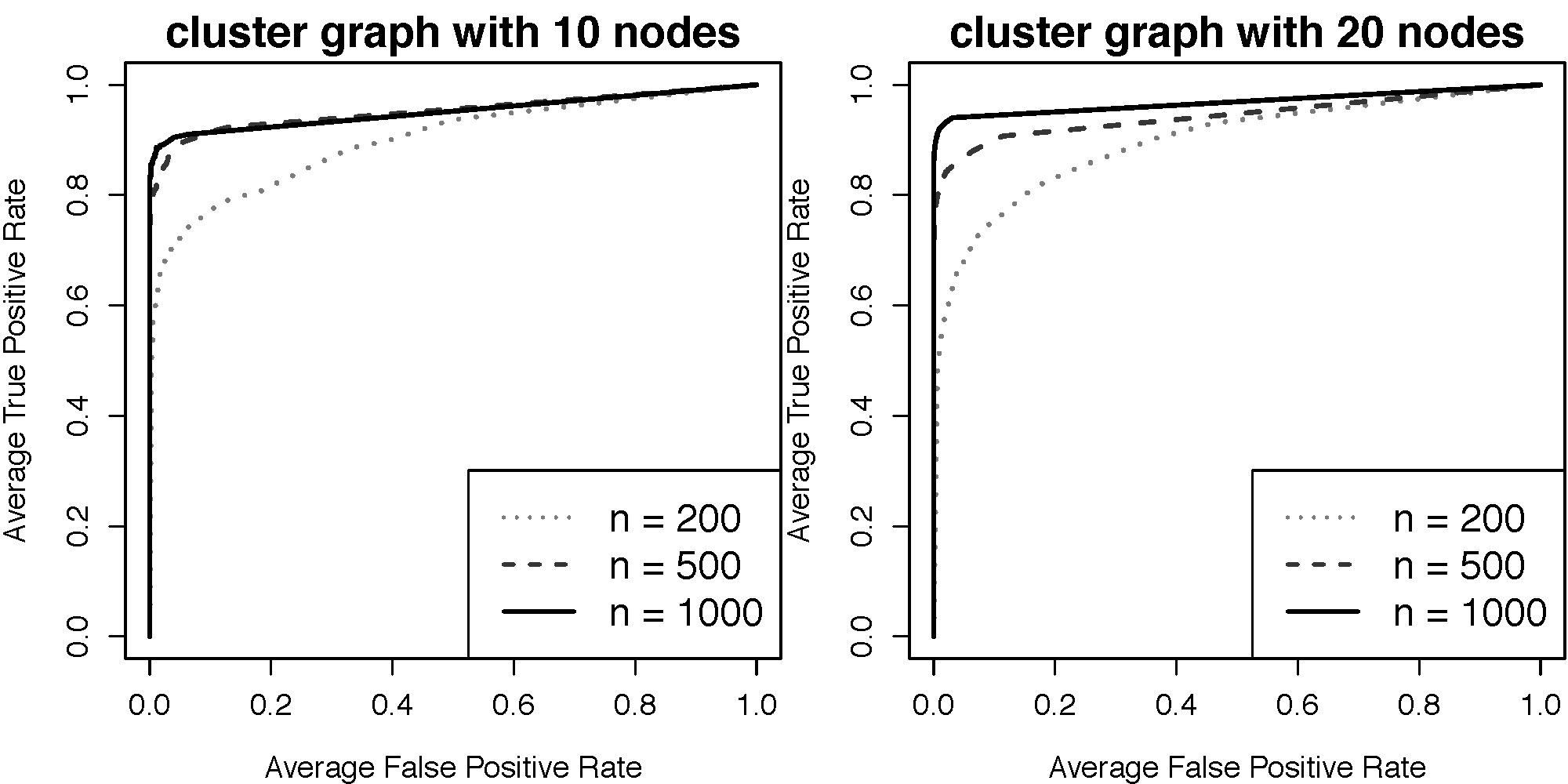}
    \includegraphics[width=12cm,height=5cm]{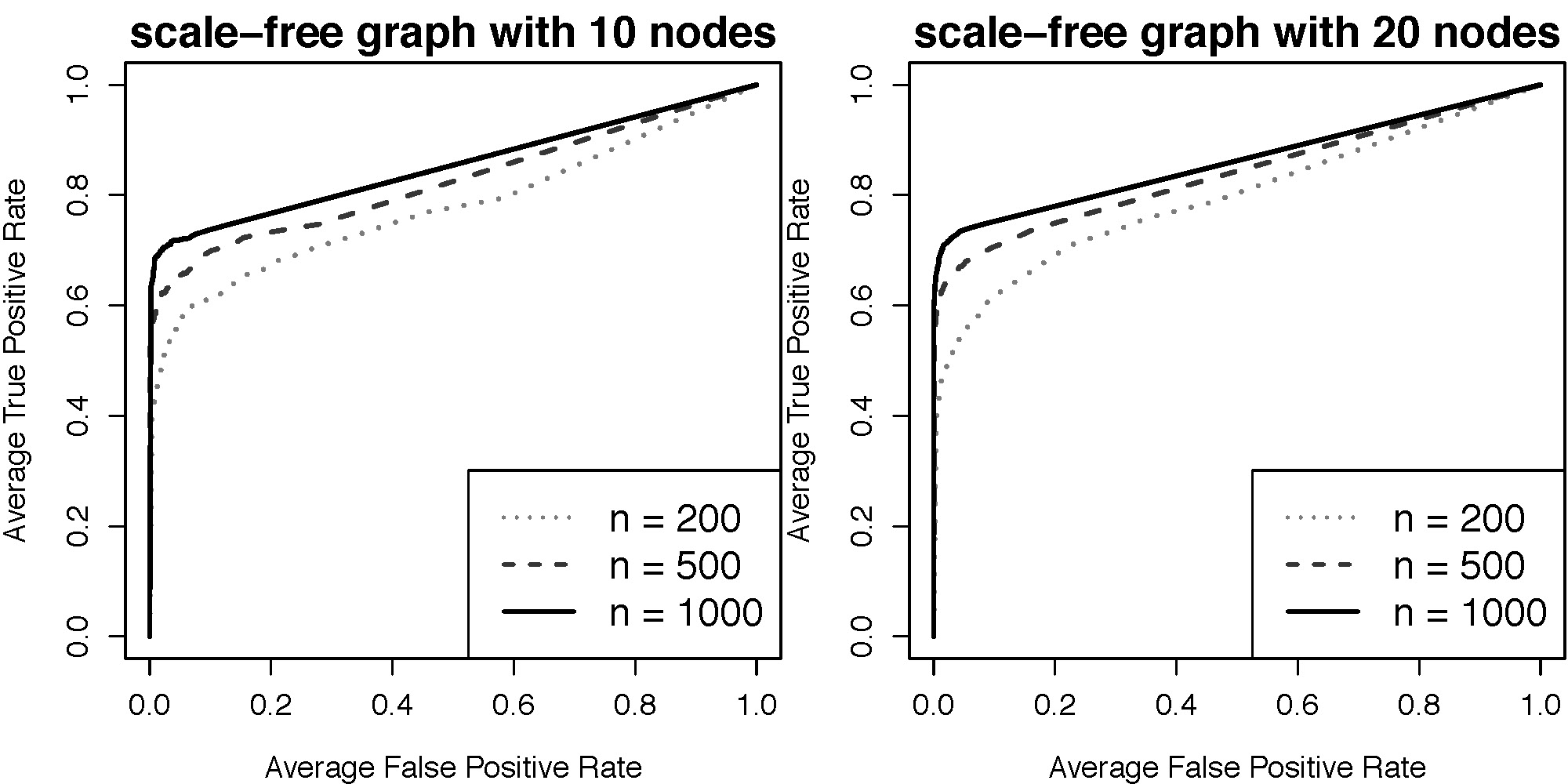}
\caption{ROC curves showing the performance of the BDMCMC algorithm in recovering the structure of the true graph in the simulation study from Section \ref{sec:simulation}. Binary tables with $p\in \{10,20\}$ variables and $n\in \{200,500,1000\}$ samples were simulated from graphs of three types (``Random'', ``Cluster'', and ``Scale-free'').}
\label{fig:Rocplot}
\end{figure}

\section{Analysis of the Geolocated Twitter Data}
\label{sec:twitteranalysis}

We come back to the $p=214$ dimensional binary contingency table constructed from geotweets that was described in Section \ref{sec:twitterdata}. We use the BDMCMC algorithm to sample graphs from the marginal posterior distribution \eqref{eq:postmpl} on $\mathcal{G}_{214}$. We employ the prior \eqref{eq:graphprior} with $\beta = 1/{214 \choose 2} = 4.388\times 10^{-5}$.  Under this prior, the expected number of edges is $1$, thus sparser graphs receive larger prior probabilities compared to denser graphs. We performed all computations on a cluster with $7$ compute nodes, each with $48$ Intel Xeon $2.6$ GHz cores with a Linux operating system.

\begin{figure} [ht] 
\centering
\includegraphics[width=1\textwidth]{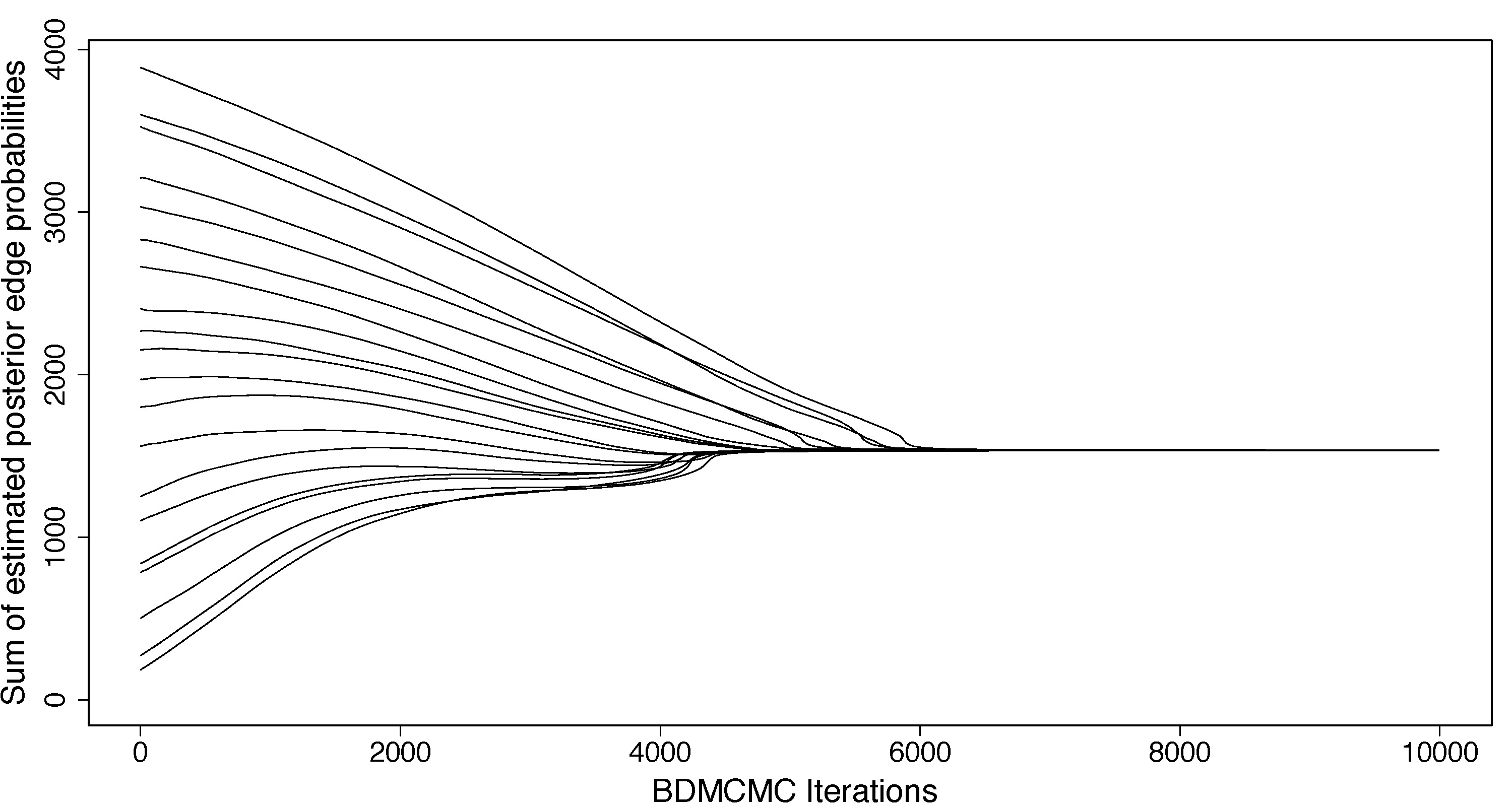}
\caption{ \label{fig:convergence_plinks} 
Convergence plot of the BDMCMC algorithm for the Twitter data introduced in Section \ref{sec:twitterdata} and analyzed in Section \ref{sec:twitteranalysis}. The plot shows the sum of the estimated posterior edge inclusion probabilities ($y$ axis) against iteration number ($x$ axis) from $20$ runs of the BDMCMC algorithm starting from $20$ different graphs sampled from graph space $\mathcal{G}_{214}$ with different number of edges.}
\end{figure}

\begin{figure} [ht] 
\centering
\includegraphics[width=1\textwidth]{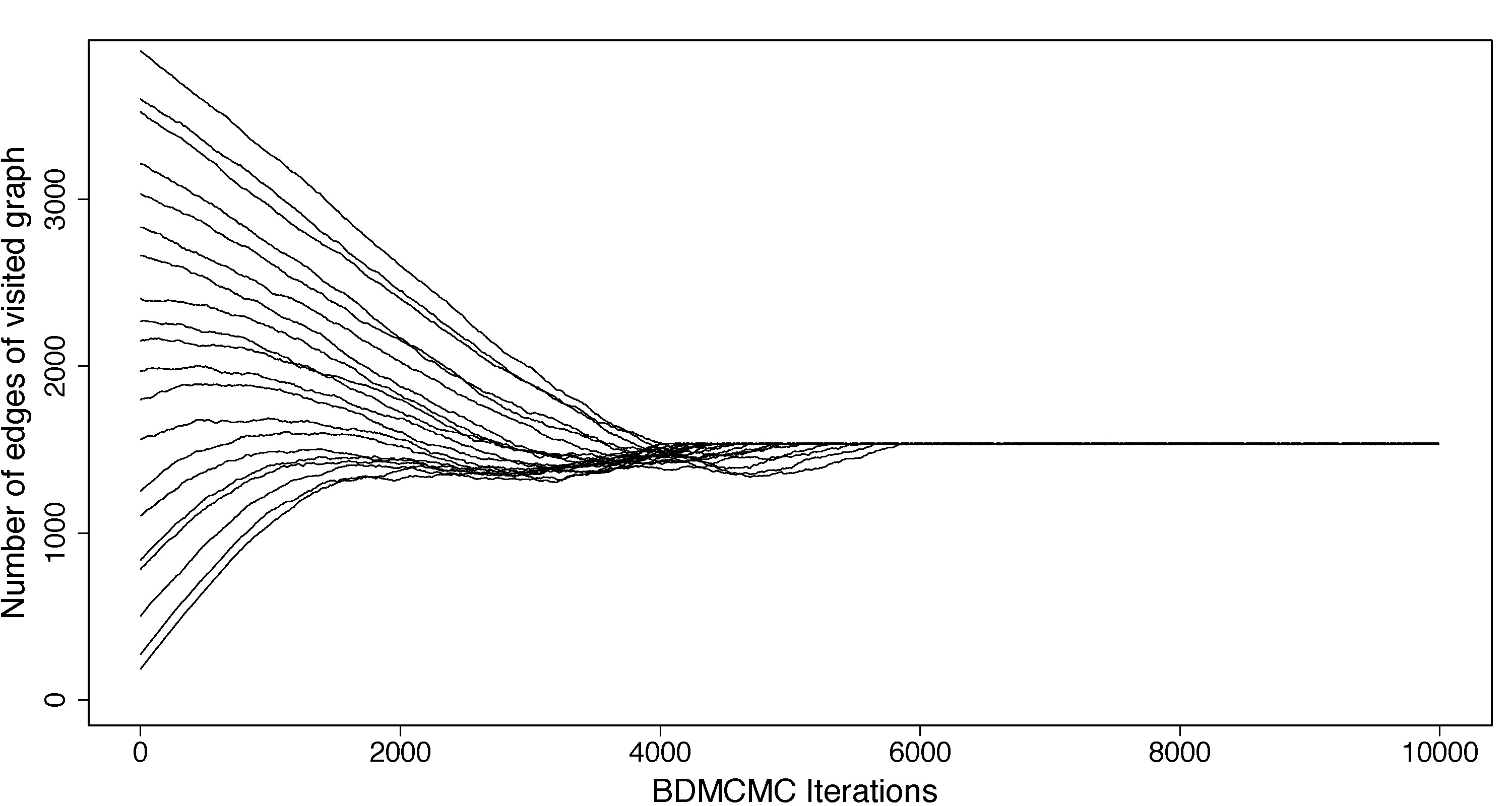}
\caption{ \label{fig:convergence_size_g} 
Convergence plot of the BDMCMC algorithm for the Twitter data introduced in Section \ref{sec:twitterdata} and analyzed in Section \ref{sec:twitteranalysis}. The plot shows the number of edges included in the sampled graphs ($y$ axis) against iteration number ($x$ axis) from $20$ runs of the BDMCMC algorithm starting from $20$ different graphs sampled from graph space $\mathcal{G}_{214}$ with different number of edges.}
\end{figure}

First, we want to gain some understanding of the ability of the BDMCMC algorithm to move towards graphs with large posterior probabilities in $\mathcal{G}_{214}$. To this end, we sample $20$ graphs $\{G_i\}_{i=1}^{20}$ having increasing number of edges: $G_i$ has a number of edges randomly sampled from $(200(i-1),200i)$. The resulting set of graphs ranges from most sparse ($G_1$) to most dense ($G_{20}$). Starting from each graph $G_i$, $i=1,\ldots,20$, we ran the BDMCMC algorithm for $10,000$ iterations. Figures \ref{fig:convergence_plinks} and \ref{fig:convergence_size_g} show the sum of the estimated posterior edge inclusion probabilities and the number of edges included in the sampled graphs against iteration number. After $7,000$ iterations in each of the $20$ runs, the BDMCMC algorithm seems to have reached the same neighborhood of graphs. Thus, although the number of graphs in $\mathcal{G}_{214}$ is extremely large ($\approx 10^{6861}$), the BDMCMC algorithm seems to be very efficient in identifying graphs with high posterior probability. 

Next, we ran the BDMCMC algorithm for $400,000$ iterations using parallel calculations of the birth and death rates from a starting graph sampled from the prior \eqref{eq:graphprior} on $\mathcal{G}_{214}$. The first $200,000$ iterations were discarded as burn-in. Figures \ref{fig:longrun_convergence_plinks} and \ref{fig:longrun_convergence_size_g} show the BDMCMC algorithm seems to have reached convergence in less than $10,000$ iterations.

\begin{figure} [ht] 
\centering
\includegraphics[width=1\textwidth]{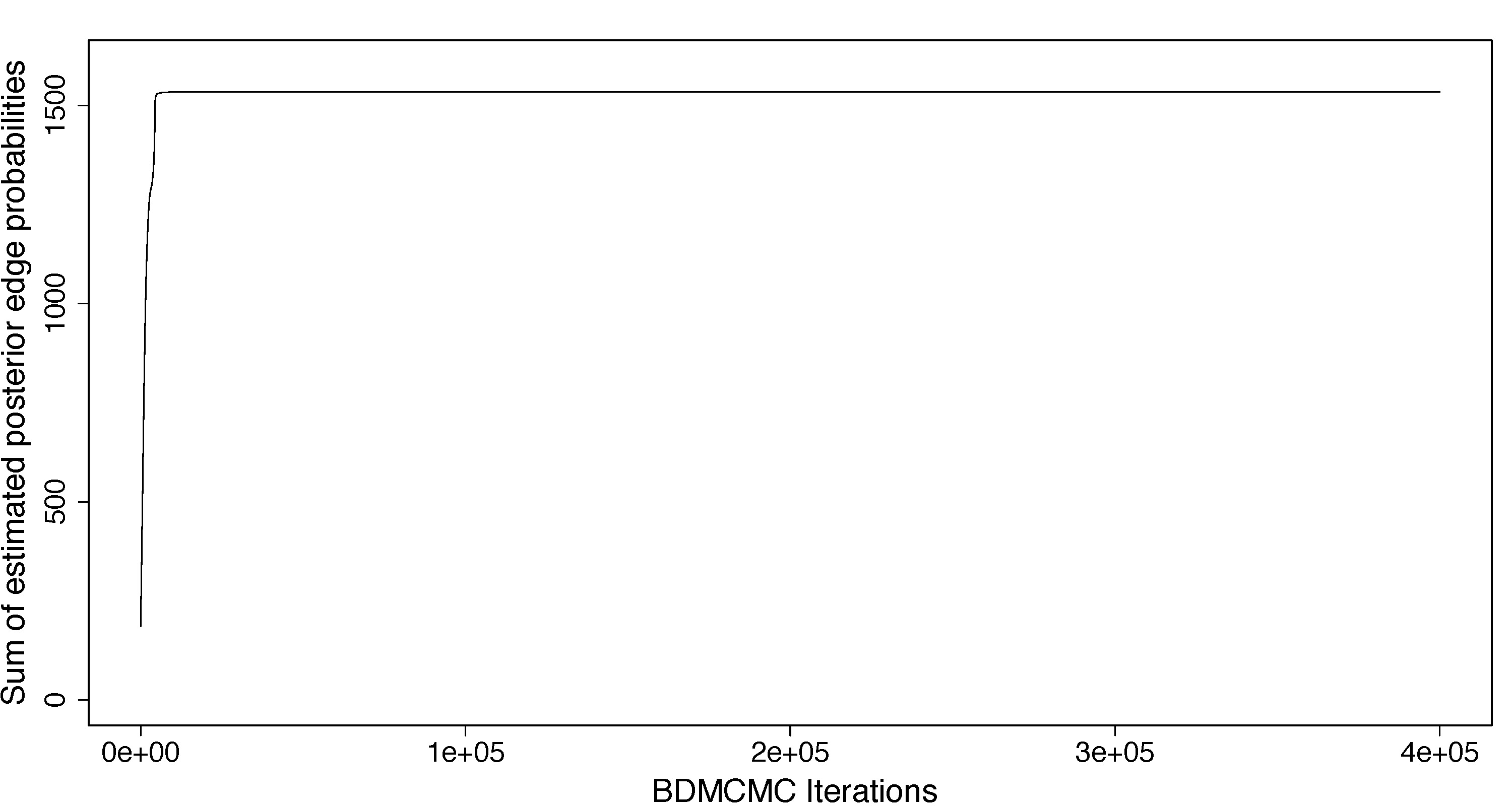}
\caption{ \label{fig:longrun_convergence_plinks} 
Convergence plot for a longer run of 400,000 iterations of the BDMCMC algorithm for the Twitter data introduced in Section \ref{sec:twitterdata} and analyzed in Section \ref{sec:twitteranalysis}. The plot shows the sum of the estimated posterior edge inclusion probabilities ($y$ axis) against iteration number ($x$ axis).}
\end{figure}

\begin{figure} [ht] 
\centering
\includegraphics[width=1\textwidth]{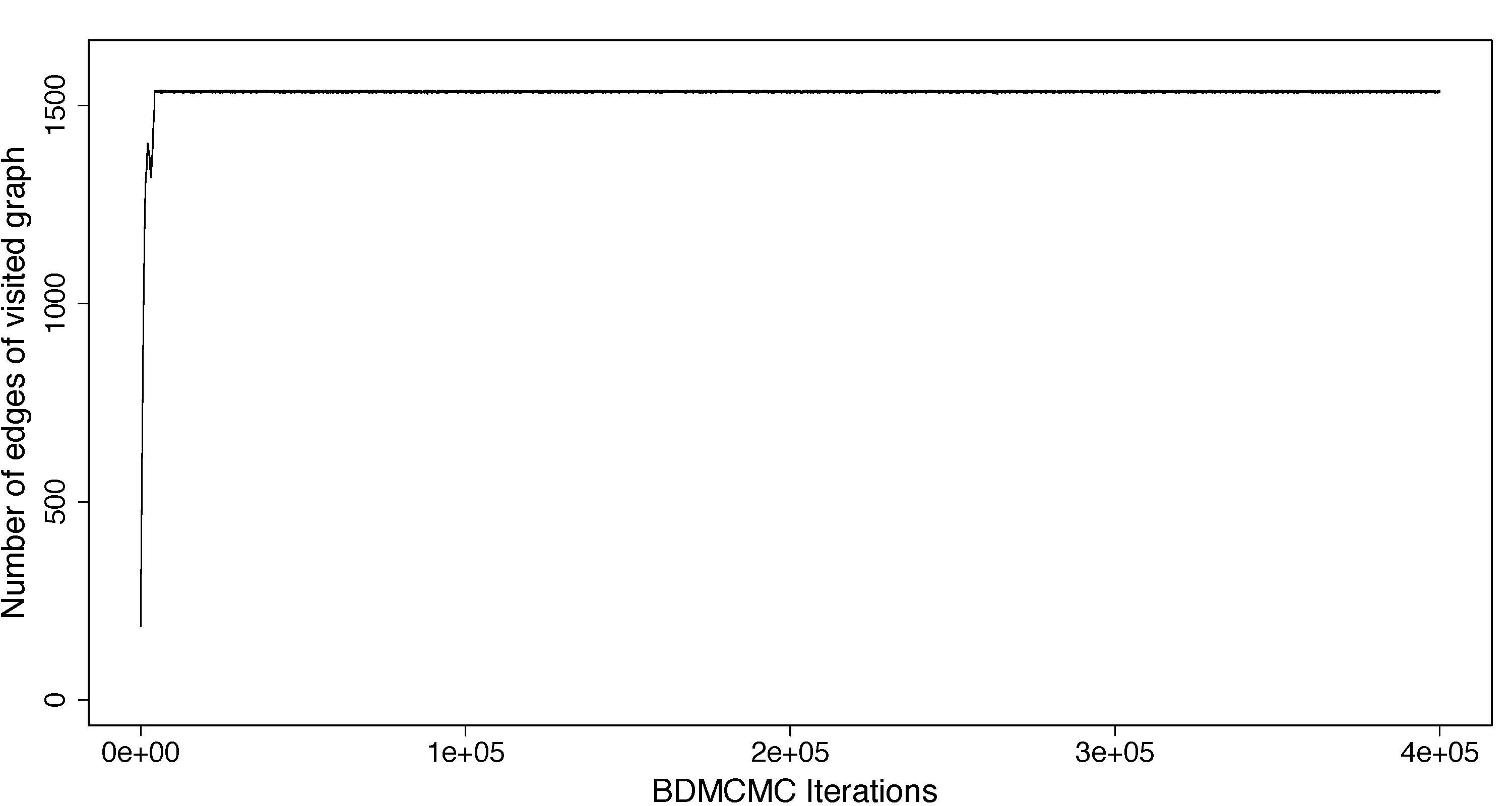}
\caption{ \label{fig:longrun_convergence_size_g} 
Convergence plot for a longer run of $400,000$ iterations of the BDMCMC algorithm for the Twitter data introduced in Section \ref{sec:twitterdata} and analyzed in Section \ref{sec:twitteranalysis}. The plot shows the number of edges included in the sampled graphs ($y$ axis) against iteration number ($x$ axis).}
\end{figure}

\begin{figure} [ht] 
  \begin{center}
    \includegraphics[width=13cm]{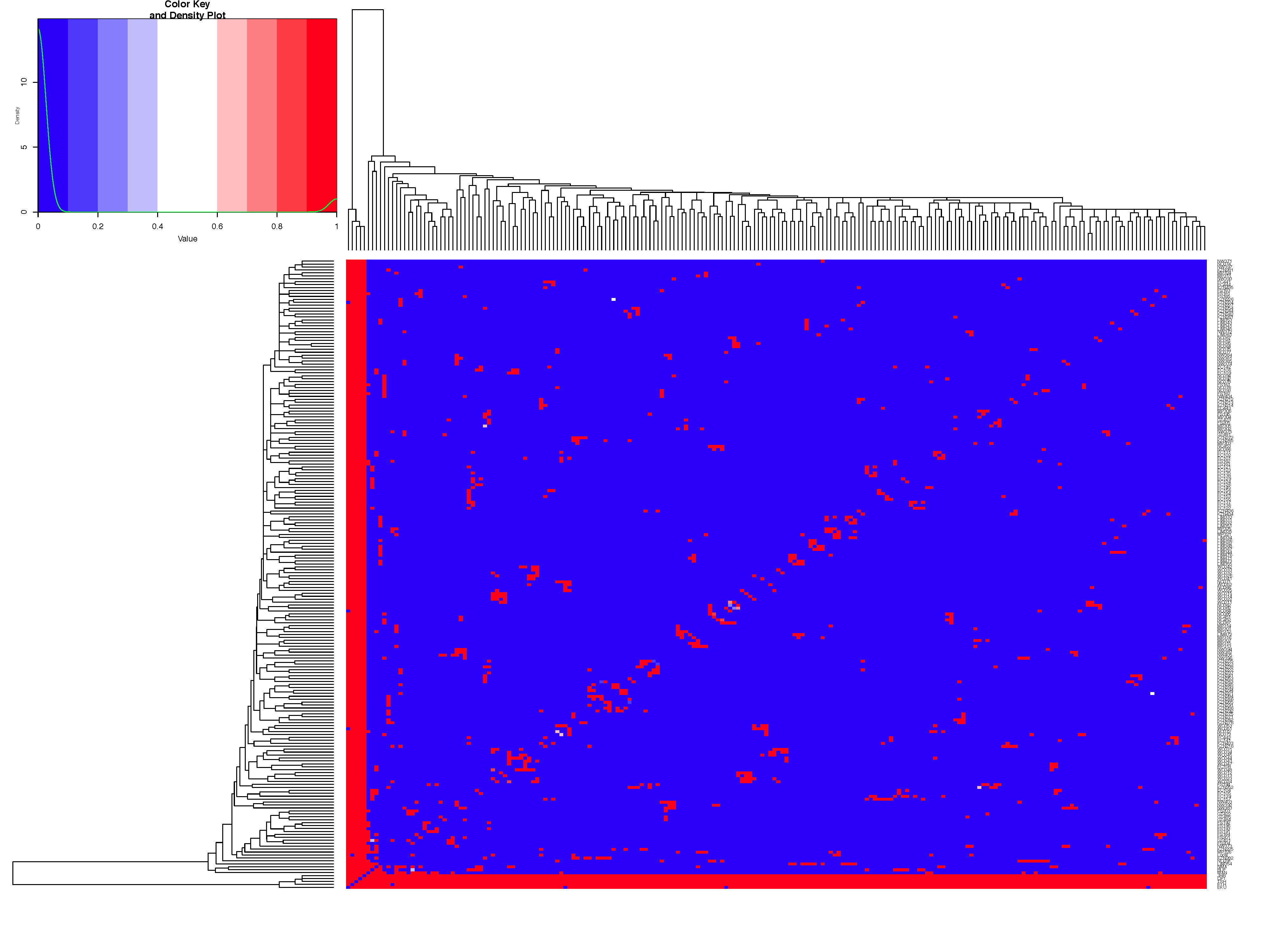}
  \end{center}
  \caption{\label{fig:heatmap} Heatmap of the $214\times 214$ matrix of posterior inclusion probability of edges for the Twitter data introduced in Section \ref{sec:twitterdata} and analyzed in Section \ref{sec:twitteranalysis}. The rows are shown in increasing order with respect to the row sums from top to bottom. The columns are shown in the same order from left to right. Red marks edges with posterior inclusion probability very close to $1$, while blue marks edges with posterior inclusion probability close to zero. The names of the variables involved are shown in the right column.}
\end{figure}

We estimate the posterior inclusion probabilities \eqref{posterior-edge} of the ${214 \choose 2} = $ $22,791$ edges. We show a heatmap of the matrix of the estimated posterior edge inclusion probabilities in Figure \ref{fig:heatmap}. Most of the estimated posterior edge inclusion probabilities are zero: $21,138$ ($92.78$\%). A number of 12, 5 and 7 edges have estimated posterior inclusion probabilities in $(0,0.5)$, $[0.5,0.9)$ and $[0.9,0.1)$, respectively. The remaining $1522$ ($6.65$\%) have estimated posterior inclusion probabilities equal to $1$. We use the median graph which includes the 1534 edges with estimated posterior inclusion probabilities greater than $0.5$ as our estimate of the conditional independence graph. Henceforth we refer to this graph as the South Africa (SA) Twitter graph. 

\begin{figure} [!ht] 
\begin{center}
\includegraphics[width=12cm]{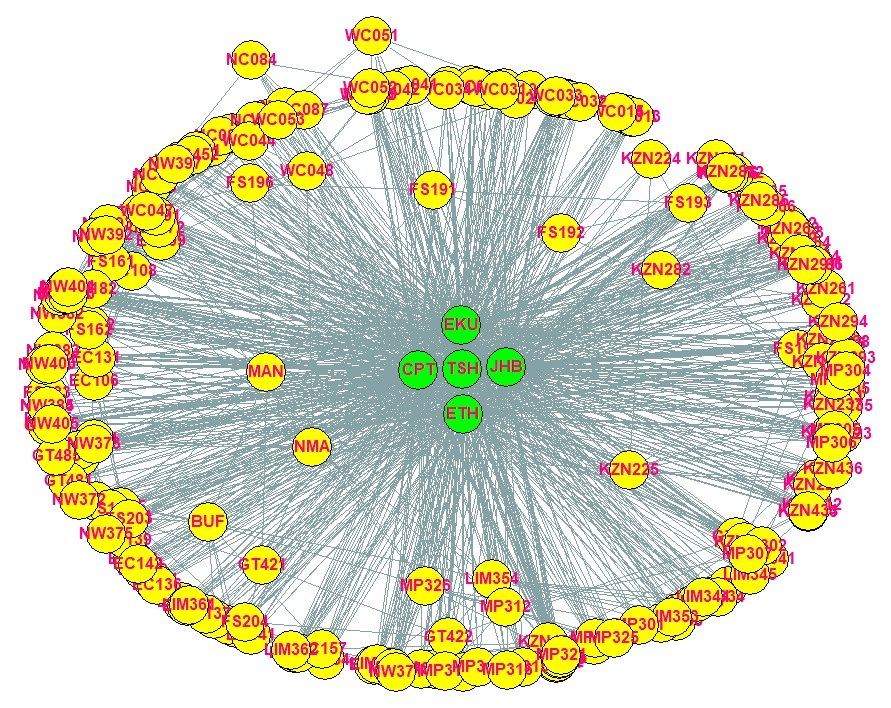}
\end{center}
\caption{ \label{fig:hubnetwork} 
Rendering of the SA Twitter graph. The five hubs are shown in green, and the remaining vertices are shown in yellow.}
\end{figure}

A rendering of the SA Twitter graph is presented in Figure \ref{fig:hubnetwork}. The 213 district municipalities of South Africa are denoted by their identifiers \--- see Table \ref{tab:summarygeodem}. This table provides the identifier, complete name, province to which it belongs, area, population size, and density for each municipality. The $214$-th vertex of this graph is associated with the Local (yes/no) variable. We explore the SA Twitter graph using four centrality measures \cite{imai-2017} that capture the extent to which a vertex is connected to other vertices, and occupies a central position in the structure of the network: (i) {\it degree} counts the number of edges that originate from a given vertex; (ii) {\it closeness} measures how close is a vertex from each one of the other vertices; (iii) {\it betweenness} finds vertices that connect other vertices (i.e., belong to the shortest paths connecting pairs of vertices); and (iv) {\it page rank} defines more central vertices based on a voting process which allocates votes to a vertex based on other connected vertices, and it is determined through an iterative algorithm. Barplots of the largest $10$ values of each of the four centrality measures are presented in Figures \ref{fig:degreetop10}, \ref{fig:closenesstop10}, \ref{fig:betweennesstop10}, and \ref{fig:pagerank10}.

\begin{figure} [ht] 
\begin{center}
\includegraphics[width=7cm]{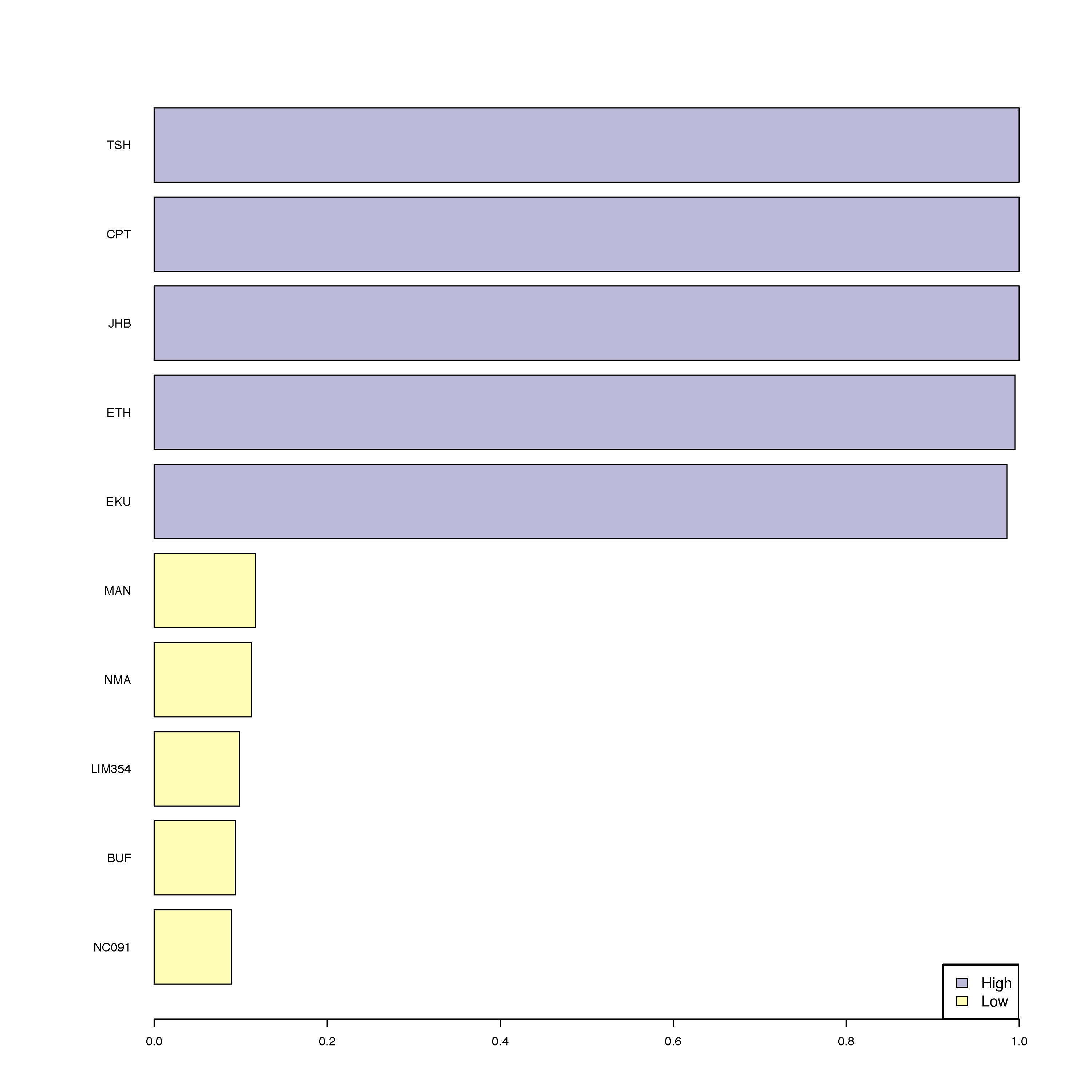}
\end{center}
\caption{ \label{fig:degreetop10} 
Barplot showing the top $10$ municipalities with the largest degree in the SA Twitter graph. The five hubs are shown in purple.}
\end{figure}

\begin{figure} [ht] 
\begin{center}
\includegraphics[width=7cm]{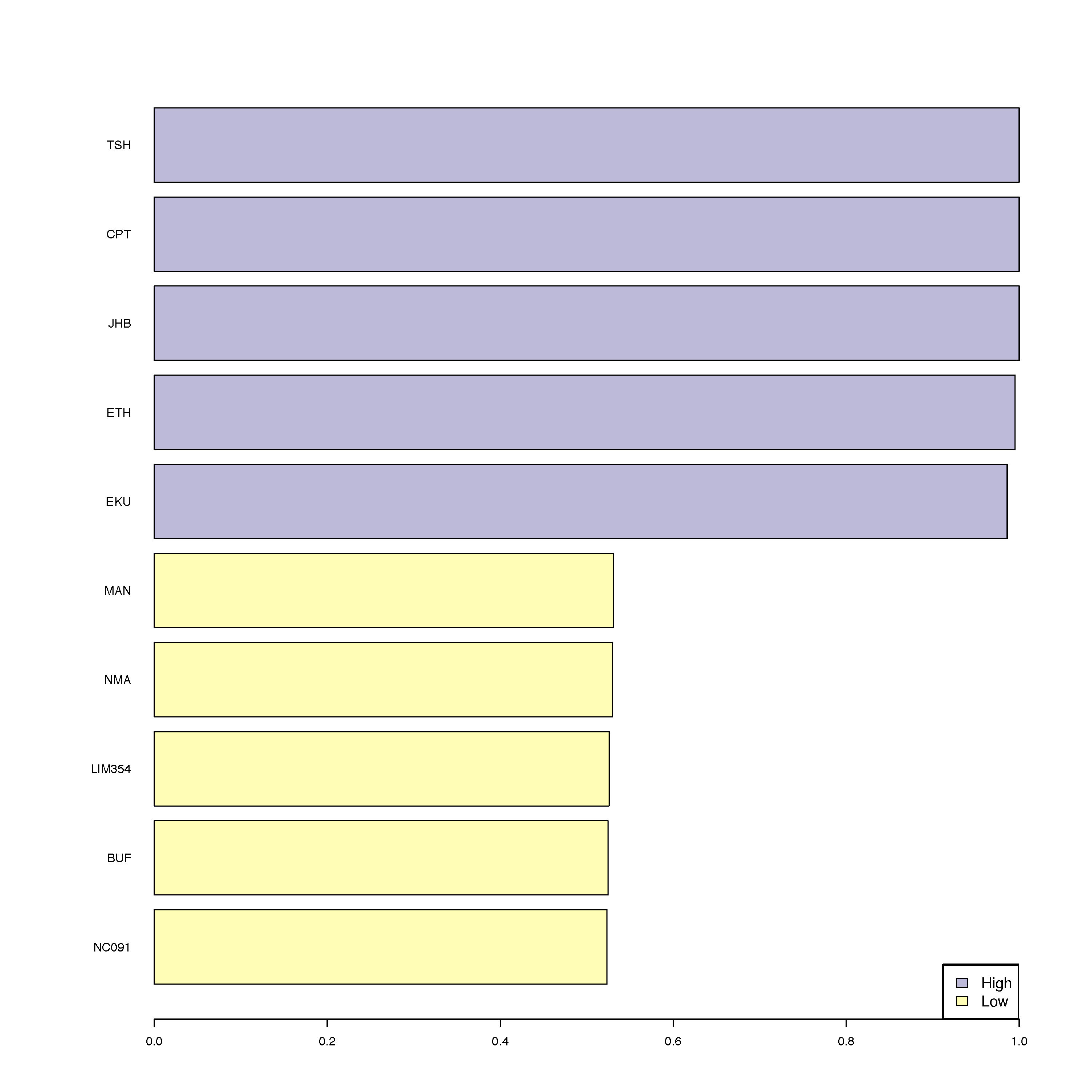}
\end{center}
\caption{ \label{fig:closenesstop10} 
Barplot showing the top $10$ municipalities with the largest closeness in the SA Twitter graph. The five hubs are shown in purple.}
\end{figure}

\begin{figure} [ht] 
\begin{center}
\includegraphics[width=7cm]{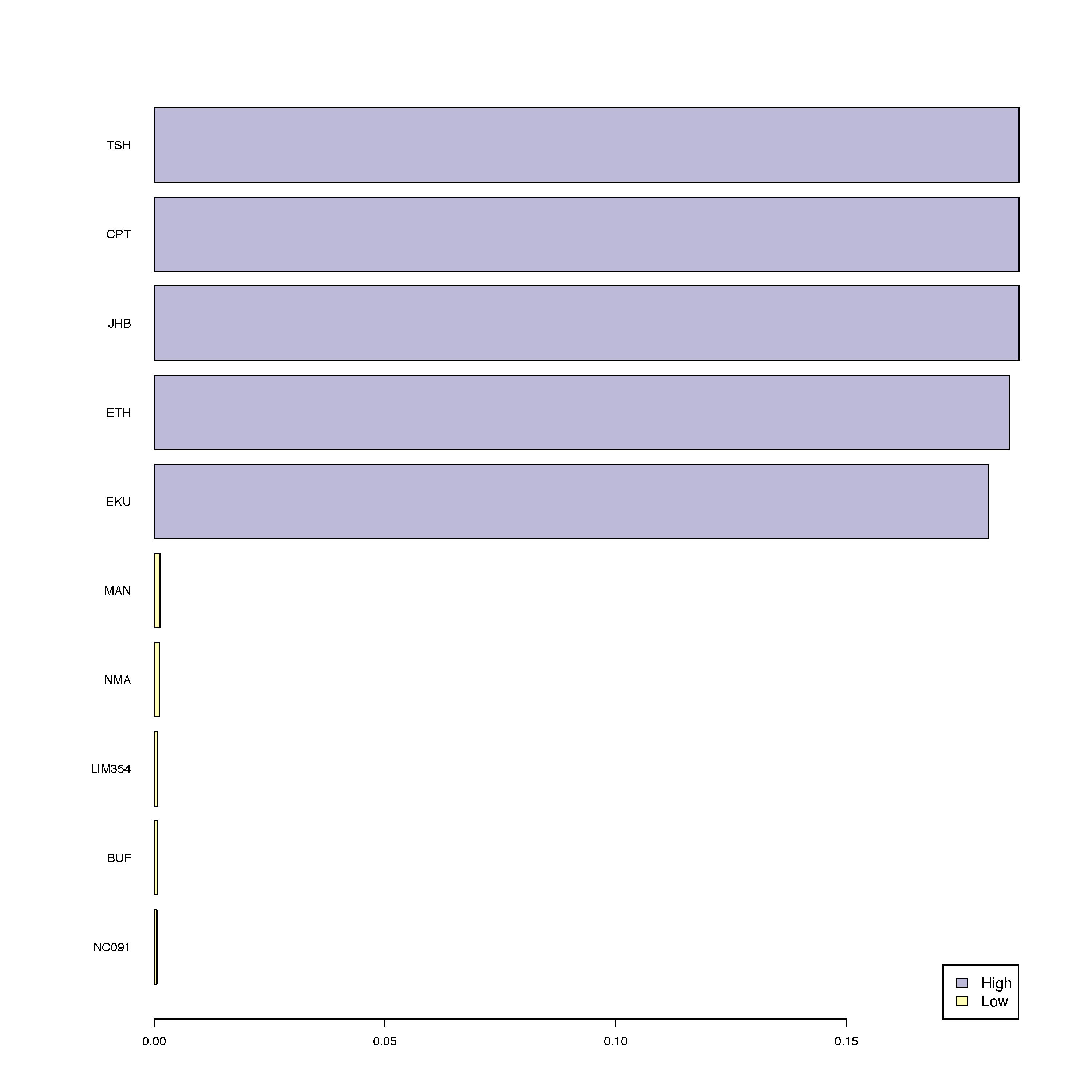}
\end{center}
\caption{ \label{fig:betweennesstop10} 
Barplot showing the top $10$ municipalities with the largest betweenness in the SA Twitter graph. The five hubs are shown in purple.}
\end{figure}

\begin{figure} [ht] 
\begin{center}
\includegraphics[width=7cm]{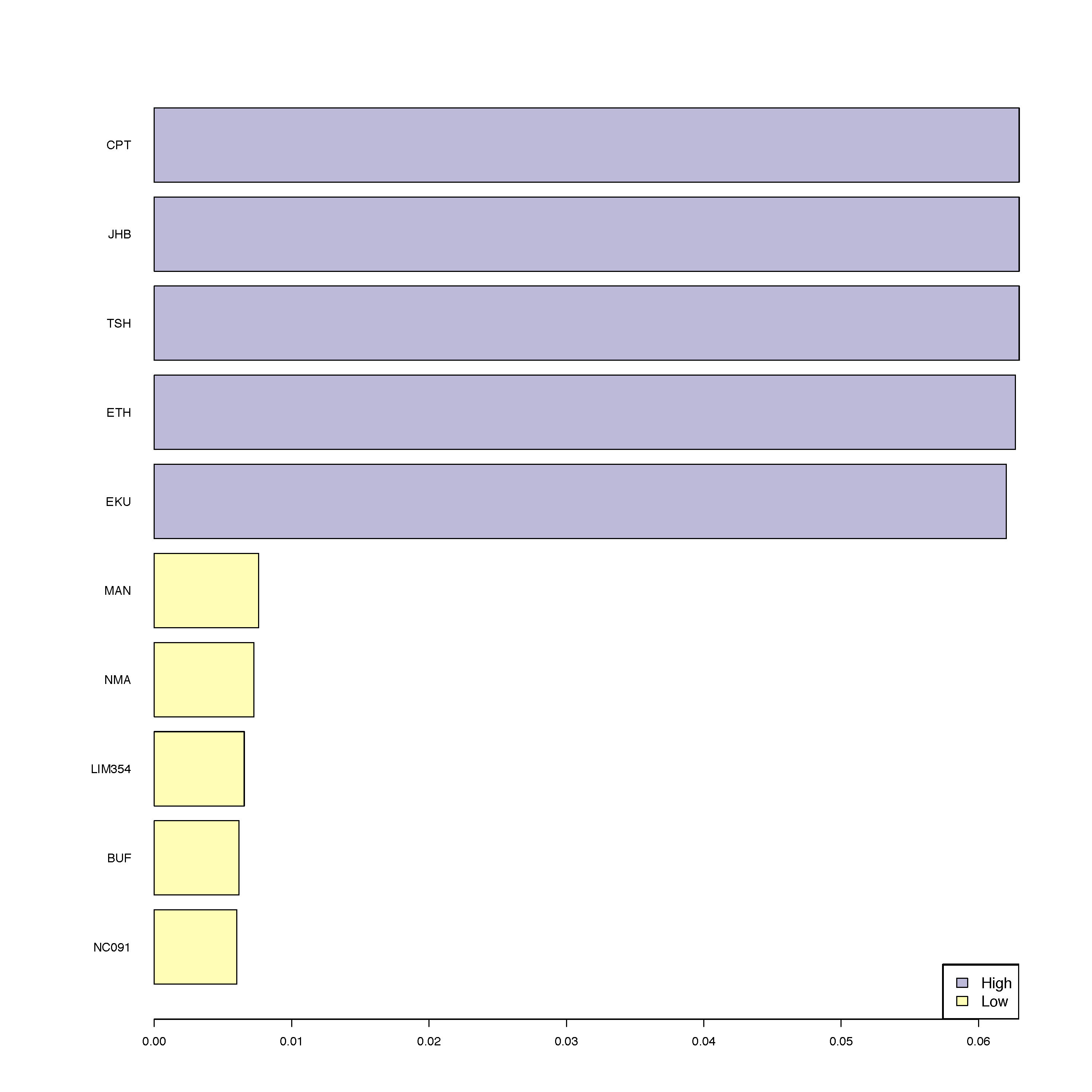}
\end{center}
\caption{ \label{fig:pagerank10} 
Barplot showing the top $10$ municipalities with the largest page rank in the SA Twitter graph. The five hubs are shown in purple.}
\end{figure}

For each of the four centrality measures, their top five largest values correspond with the following municipalities: Johannesburg (JHB, Gauteng), Ekurhuleni (EKU, Gauteng), Tshwane (TSH, Gauteng), eThekwini (ETH, KwaZulu-Natal), and Cape Town (CPT, Western Cape). The next five largest values also correspond with the same five municipalities for all four measures: Mangaung (MAN, Free State), Nelson Mandela Bay (NMA, Eastern Cape), Polokwane (LIM354, Limpopo), Buffalo City (BUF, Eastern Cape), and Sol Plaatjie (NC091, Northern Cape). We remark that the values of centrality measures for JHB, EKU, TSH, ETH and CPT are significantly larger than the values for MAN, NMA, LIM354, BUF and NC091. For example, the degrees for the first group are $213$, $210$, $213$, $212$, and $213$, while the degrees for the second group are $25$, $24$, $21$, $20$, and $19$.

As such, MAN, NMA, LIM354, BUF and NC091 are the five key hubs of the SA Twitter graph. Their geographical location is mapped in Figure \ref{fig:hubmap}, while Table \ref{tab:hub} gives summary information about them. Three hubs (JHB, TSH, EKU) are located in the Johannesburg/Pretoria area which represents the region of South Africa in which more than 11 million people reside ($2015$ South African National Census) either permanently, or temporarily to find employment in factories or gold mines. The other two hubs are located around the cities of Cape Town and Durban which, together with Johannesburg and Pretoria, are among the largest South African cities. A great number of local and international travelers visit these five hubs for shorter or longer periods of times. Based on the predictive interpretation of the SA Twitter graph, the presence or absence of a Twitter user from one of the five hubs is predictive of the presence or absence of this user from almost all the other district municipalities. Furthermore, the presence or absence of an user from almost all the municipalities that are not hubs is predictive of their presence or absence from each of the hubs.

\begin{figure} [!ht] 
\begin{center}
\includegraphics[width=15cm]{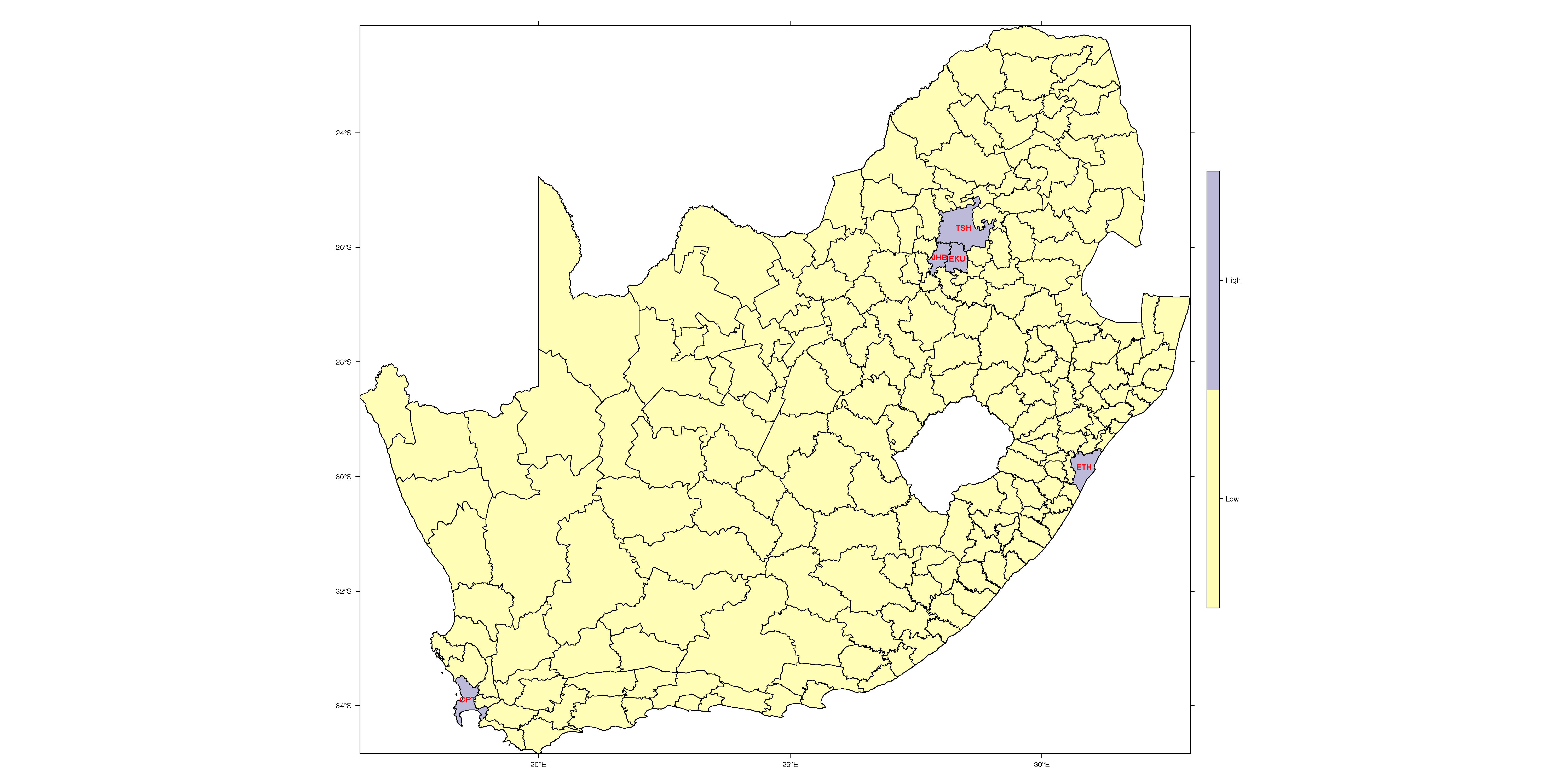}
\end{center}
\caption{ \label{fig:hubmap} 
Map of South Africa showing five municipalities (purple) with the largest centrality measures in the SA Twitter graph. These are the five hubs of this graph. The same municipalities score highest with respect to degree, closeness, betweenness and page rank. The remaining 208 municipalities are shown in yellow.}
\end{figure}

\begin{table}[ht]
\centering
\caption{\label{tab:hub}Summary geographic and demographic information about the five hub municipalities in the SA Twitter graph. Population data extracted from the 2016 Community Survey, Statistics South Africa. Retrieved from \texttt{https://interactive2.statssa.gov.za/webapi}.}
\begin{tabular}{rlllll}
  \hline
  Id. & Municipality name & Province & Area (km$^2$) & Population & Density \\ 
  \hline
TSH & City of Tshwane & Gauteng & 6,298 & 3,275,152 & 520 \\ 
  ETH & eThekwini & KwaZulu-Natal & 2,556 & 3,702,231 & 1,448.50 \\ 
  CPT & City of Cape Town & Western Cape & 2,446 & 4,005,016 & 1,637.60 \\ 
  JHB & City of Johannesburg & Gauteng & 1,645 & 4,949,347 & 3,008.80 \\ 
  EKU & Ekurhuleni & Gauteng & 1,975 & 3,379,104 & 1,710.60 \\
   \hline
\end{tabular}
\end{table}

\begin{figure} [ht] 
\begin{center}
\includegraphics[width=12cm]{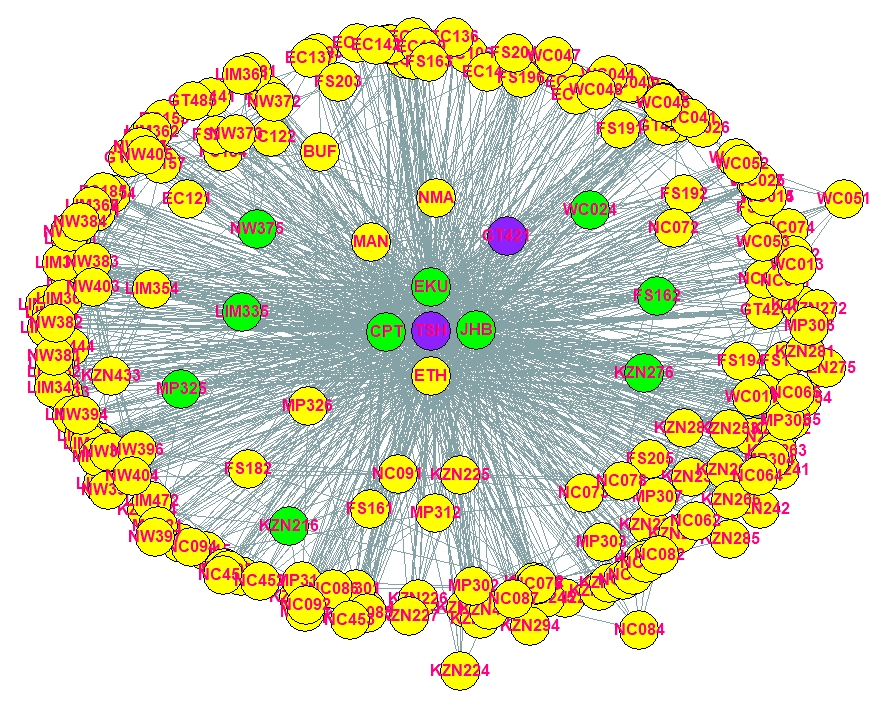}
\end{center}
\caption{ \label{fig:highlownetwork} 
Rendering of the SA Twitter graph that shows the $12$ district municipalities that are linked by an edge with the vertex associated with the Local variable. The two municipalities with an aOR greater than 1 are shown in purple, the $10$ municipalities that have an aOR smaller than 1 are shown in green, while the rest of the municipalities are shown in yellow.}
\end{figure}

The vertex associated with the variable Local is not central in the structure of the SA Twitter graph. Its degree is $12$, and the other centrality measures are also significantly smaller compared to those of the five hubs. The 12 district municipalities that are connected with an edge with the Local vertex are mapped in Figure \ref{fig:foreignmap}; the corresponding vertices in the SA Twitter graph are shown in Figure \ref{fig:highlownetwork}. In addition to four of the hubs (JHB, EKU, TSH, CPT), the presence or absence patterns of a Twitter user from the following municipalities are predictive of whether this user is local to South Africa: FS162 (Kopanong, Free State), KZN216 and KZN276 (Ray Nkonyeni and Big Five Hlabisa, KwaZulu-Natal), LIM335 (Maruleng, Limpopo), MP325 (Bushbuckridge, Mpumalanga), NW375 (Moses Kotane, North West), GT421 (Emfuleni, Gauteng), and WC024 (Stellenbosch, Western Cape). It is quite interesting to examine the spatial distribution of these 12 municipalities: CPT and WC024 are adjacent municipalities around Cape Town; TSH, JHB and EKU define a spatially contiguous region in the Johannerburg/Pretoria area; while LIM335 and MP325 are adjacent municipalities at the border between South Africa and Mozambique. In the KwaZulu-Natal province, the municipalities KZN216 and KZN276 that are located to the south and to the north of the city of Durban are among the neighbors of Local, but the ETH municipality in which Durban is located is not (quite surprisingly) among the neighbors of Local. The FS162 municipality is located south of Bloemfontein \--- a major city in South Africa known for its mining industry. The NW375 municipality is located north west of the Johannesburg/Pretoria area, and it comprises Sun City and a major national park \--- both key touristic destinations. 

\begin{table}[ht]
\centering
\caption{\label{tab:aORlow}Summary geographic and demographic information about the 10 municipalities linked by an edge with the Local (yes/no) variable in the SA Twitter graph that have an aOR smaller than 1. Population data extracted from the 2016 Community Survey, Statistics South Africa. Retrieved from \texttt{https://interactive2.statssa.gov.za/webapi}.}
{\tiny
\begin{tabular}{rllllll}
  \hline
  Id. & Municipality name & Province & Area (km$^2$) & Population & Density & aOR (95\% CI)\\ 
  \hline
  FS162 & Kopanong & Free State & 15,645 & 49,999 & 3.2 & 0.259  (0.237,0.284)\\ 
  KZN216 & Ray Nkonyeni & KwaZulu-Natal & 1,487 & 348,533 & 234.4 & 0.731  (0.6769839,0.790)\\ 
  LIM335 & Maruleng & Limpopo & 3,563 & 99,946 & 28.1 & 0.251 (0.225,0.280)\\ 
  NW375 & Moses Kotane & North West & 5,726 & 243,648 & 42.5 & 0.692  (0.640,0.748)\\ 
  KZN276 & Big Five Hlabisa & KwaZulu-Natal & 3,466 & 116,622 & 33.6 & 0.482  (0.419,0.555)\\ 
  WC024 & Stellenbosch & Western Cape & 831 & 173,197 & 208.4 & 0.760 (0.726,0.796)\\ 
  MP325 & Bushbuckridge & Mpumalanga & 10,248 & 546,215 & 53.3 & 0.413  (0.380,0.448)\\ 
  CPT & City of Cape Town & Western Cape & 2,446 & 4,005,016 & 1,637.60 & 0.354 (0.346,0.362)\\ 
  JHB & City of Johannesburg & Gauteng & 1,645 & 4,949,347 & 3,008.80 & 0.920  (0.898,0.942)\\ 
  EKU & Ekurhuleni & Gauteng & 1,975 & 3,379,104 & 1,710.60 & 0.804  (0.781,0.827)\\ 
   \hline
\end{tabular}
}
\end{table}

\begin{table}[ht]
\centering
\caption{\label{tab:aORhigh}Summary geographic and demographic information about the 2 municipalities linked by an edge with the Local (yes/no) variable in the SA Twitter graph that have an aOR greater than $1$. Population data extracted from the $2016$ Community Survey, Statistics South Africa. Retrieved from \texttt{https://interactive2.statssa.gov.za/webapi}.}
{\tiny
\begin{tabular}{rllllll}
  \hline
 Id. & Municipality name & Province & Area (km$^2$) & Population & Density & aOR (95\% CI)\\ 
  \hline
  TSH & City of Tshwane & Gauteng & 6,298 & 3,275,152 & 520 & 2.347  (2.260,2.437)\\ 
  GT421 & Emfuleni & Gauteng & 966 & 733,445 & 759.3 & 5.414  (4.714,6.218)\\    \hline
\end{tabular}
}
\end{table}

We determine the effect of the presence and absence patterns of Twitter users from these $12$ municipalities on the odds of being local to South Africa by fitting a logistic regression model for the Local variable with $12$ explanatory variables associated with these municipalities. The estimated adjusted odds ratios are given in Tables \ref{tab:aORlow} and \ref{tab:aORhigh}. A number of 10 municipalities have adjusted odds ratios significantly smaller than $1$ at  significance level $\alpha = 0.05$. Given the same presence and absence pattern in the remaining 11 municipalities, a Twitter user that posted geotweets from one of these municipalities has smaller odds of being local to South Africa compared to another Twitter user that did not post geotweets from that municipality. However, the TSH and GT421 municipalities located to the north and to the south of the Johannesburg/Pretoria area have estimated adjusted odds ratios significantly greater than $1$ at significance level $\alpha = 0.05$. Given the same presence and absence pattern in the remaining 11 municipalities, the odds of being local to South Africa of a Twitter user that was present in GT421 (TSH) are $5.414$ ($2.347$) times larger than the odds of being local to South Africa of another Twitter user that was absent from GT421 (TSH). It is known that a considerable number of Mozambicans come to work in the mines in the Johannesburg/Pretoria area for extended periods of time \cite{Baltazar2015}. Their residences might be located in the GT421 and TSH municipalities where they could exceed the number of South African Twitter users.

\begin{figure} [!ht] 
\begin{center}
\includegraphics[width=9cm]{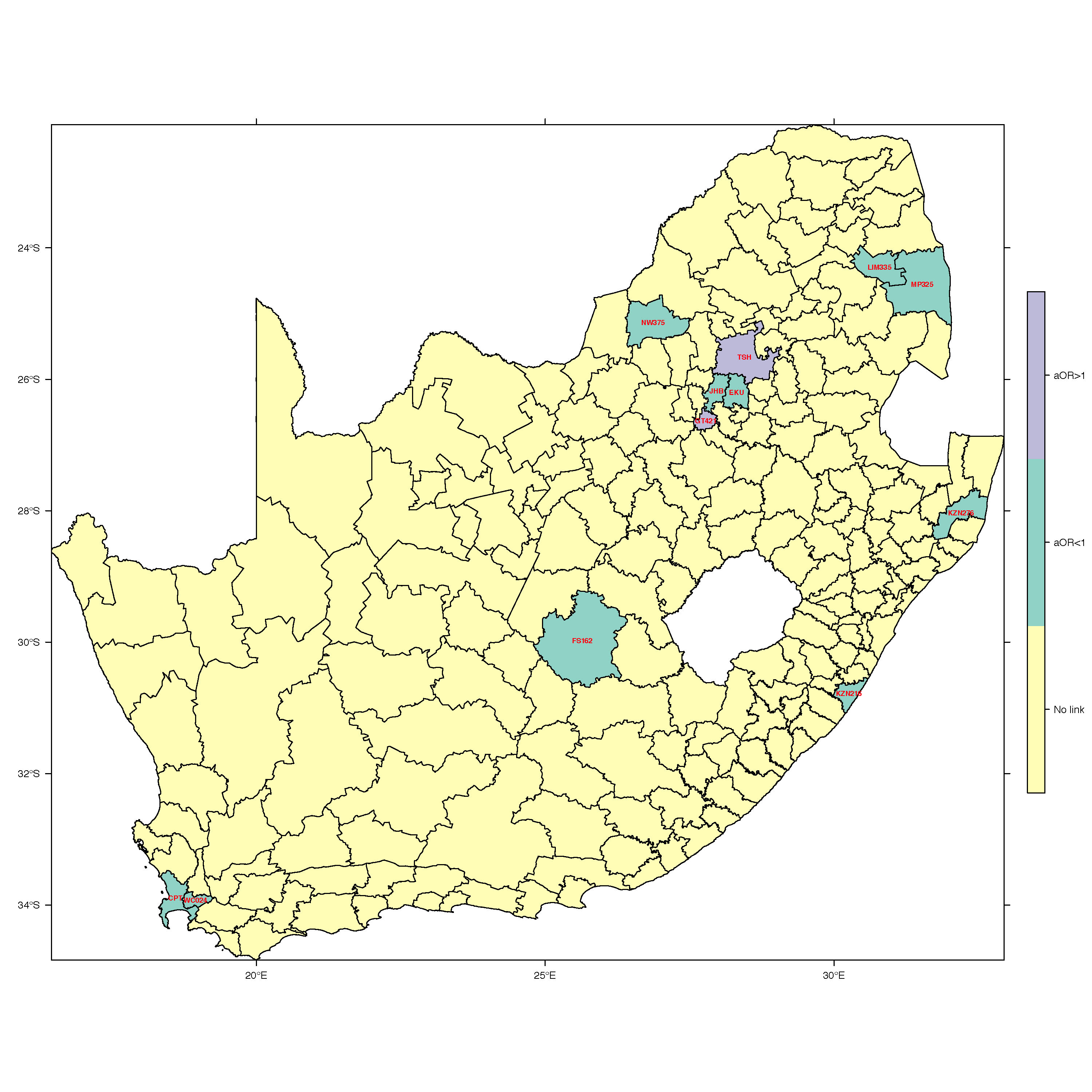}
\end{center}
\caption{ \label{fig:foreignmap} 
Map of South Africa showing the 12 municipalities that are linked by an edge with the vertex associated with the Local variable in the SA Twitter graph.  Two of these municipalities (purple) have an adjusted odds ratio (aOR) greater than 1, while 10 of them (green) have an aOR smaller than 1. The remaining 208 municipalities are shown in yellow.}
\end{figure}

\section{Conclusions} \label{sec:conclusions}

This paper makes several contributions. First, it generalizes the birth-death Markov chain Monte Carlo (BDMCMC) algorithm introduced by \cite{mohammadi2015bayesianStructure} in the context of Gaussian graphical models to general undirected graphical models. Second, based on marginal pseudo-likelihood for categorical data of \cite{pensar2016marginal}, we show how to efficiently calculate the birth and death rates for the BDMCMC algorithm for arbitrary undirected graphs. Third, we use our methodology to analyze a $214$-dimensional contingency table that captures the mobility patterns of Twitter users in South Africa. This is a dataset we collected at the University of Washington which has never been analyzed before. 

The hill-climbing (HC) algorithm \cite{pensar2016marginal} determines graphs with high posterior probability using a greedy hill-climbing optimization algorithm. For this reason, the HC algorithm will inevitably end up in a local maximum. Which local maximum the HC algorithm will find depends on the choice of starting graph. The results of the simulation study from Section \ref{sec:simulation} were obtained by starting the HC algorithm from empty graphs. Since the true graphs were sparse, the HC algorithm recorded a good performance that was comparable with the performance of the BDMCMC algorithm. However, if we would have started the HC algorithm from random graphs that contained a larger number of edges, the HC algorithm might have been at a disadvantage. As we illustrated in  Section \ref{sec:twitteranalysis}, starting the BDMCMC algorithm from sparser or denser graphs led to the identification of the same neighborhood of graphs with high posterior probabilities. The BDMCMC algorithm has a key advantage over the HC algorithm in terms of its ability to visit graphs with lower posterior probability in order to escape local optima, and move towards other graphs with larger posterior probabilities.

Our applied results give an understanding of the movements of $476,601$ individuals that used geolocated tweets in South Africa between $2011$ and $2016$. It is true that the movements of this specific group of people cannot be considered to be representative of major flows of movement of South Africans, or of the visitors of this country. And, due to the selected locations Twitter users choose to post their tweets from, it is possible that even the travel trajectories of these individuals could be only partially captured. However, to the best of our knowledge, there is no other study on human mobility that involves a larger number of individuals in South Africa, and comprises a larger number of recorded locations ($>46$ millions). While our findings must be interpreted with care from a sociodemographic perspective, the methodology we introduce in this article can be successfully applied to modeling patterns of repeated across regions movement that span entire countries, and comprise a large number of individuals. Our modeling approach is based on multi-way tables that cross-classify presence/absence patterns from regions of interest, together with other relevant categorical factors. Our framework goes beyond methods that focus exclusively on modeling flows of migration between origin and destination areas.

We showed that our version of the BDMCMC algorithm can efficiently determine conditional independence graphs for $214$ categorical variables. As far as we know, this is the largest categorical dataset analyzed so far with loglinear models. These developments would not have been possible without Steve Fienberg's visionary life long work which led to the birth of a research community that spans several disciplines (social sciences, health and medical sciences, computer science, and statistics) and will without doubt generate fundamental scientific knowledge for many generations to come.

\section*{Acknowledgment}

This work was supported in part by the National Science Foundation Grant DMS/MPS-1737746 to University of Washington. The authors thank Johan Pensar for providing some of the code used in the simulation study and Sven Baars for his suggestions related to parallel coding in \texttt{C++}. We also would like to thank the Center for Information Technology of the University of Groningen for their support  and for providing access to the Peregrine high performance computing cluster. 

{\tiny
\begin{longtable}{rlllll}
\caption{\label{tab:summarygeodem}Summary geographic and demographic information $213$ South African municipalities. Population data extracted from the 2016 Community Survey, Statistics South Africa. Retrieved from \texttt{https://interactive2.statssa.gov.za/webapi}.}\\
  \hline Id. & Municipality name & Province & Area (km$^2$) & Population & Density\\ \hline
 \endfirsthead   
  
\multicolumn{6}{c}%
{{\bfseries \tablename\ \thetable{} -- continued from previous page}} \\  
\hline Id. & Municipality name & Province & Area (km$^2$) & Population & Density\\ \hline
\endhead
    
EC101 & Dr Beyers Naude & Eastern Cape & 28,653 & 82,197 & 2.9 \\ 
  EC104 & Makana & Eastern Cape & 4,376 & 82,060 & 18.8 \\ 
  EC105 & Ndlambe & Eastern Cape & 1,841 & 63,180 & 34.3 \\ 
  EC121 & Mbhashe & Eastern Cape & 3,303 & 277,250 & 84 \\ 
  EC122 & Mnquma & Eastern Cape & 3,137 & 246,813 & 78.7 \\ 
  EC131 & Inxuba Yethemba & Eastern Cape & 11,663 & 70,493 & 6 \\ 
  EC137 & Engcobo & Eastern Cape & 2,484 & 162,014 & 65.2 \\ 
  EC141 & Elundini & Eastern Cape & 5,019 & 144,929 & 28.9 \\ 
  EC142 & Senqu & Eastern Cape & 7,329 & 140,720 & 19.2 \\ 
  EC153 & Ngquza Hill & Eastern Cape & 2,477 & 303,379 & 122.5 \\ 
  EC154 & Port St Johns & Eastern Cape & 1,291 & 166,779 & 129.2 \\ 
  EC155 & Nyandeni & Eastern Cape & 2,474 & 309,702 & 125.2 \\ 
  EC156 & Mhlontlo & Eastern Cape & 2,880 & 189,176 & 65.7 \\ 
  EC157 & King Sabata Dalindyebo & Eastern Cape & 3,019 & 488,349 & 161.8 \\ 
  EC441 & Matatiele & Eastern Cape & 4,352 & 219,447 & 50.4 \\ 
  FS161 & Letsemeng & Free State & 9,828 & 40,044 & 4.1 \\ 
  FS162 & Kopanong & Free State & 15,645 & 49,999 & 3.2 \\ 
  FS163 & Mohokare & Free State & 8,776 & 35,840 & 4.1 \\ 
  FS181 & Masilonyana & Free State & 6,618 & 62,770 & 9.5 \\ 
  FS182 & Tokologo & Free State & 9,326 & 29,149 & 3.1 \\ 
  FS183 & Tswelopele & Free State & 6,524 & 47,373 & 7.3 \\ 
  FS184 & Matjhabeng & Free State & 5,690 & 429,113 & 75.4 \\ 
  FS185 & Nala & Free State & 4,129 & 78,515 & 19 \\ 
  FS191 & Setsoto & Free State & 5,431 & 117,362 & 21.6 \\ 
  FS192 & Dihlabeng & Free State & 4,868 & 140,044 & 28.8 \\ 
  FS193 & Nketoana & Free State & 5,611 & 64,893 & 11.6 \\ 
  FS194 & Maluti a Phofung & Free State & 4,338 & 353,452 & 81.5 \\ 
  FS195 & Phumelela & Free State & 8,196 & 50,054 & 6.1 \\ 
  FS196 & Mantsopa & Free State & 4,291 & 53,525 & 12.5 \\ 
  FS201 & Moqhaka & Free State & 7,925 & 154,732 & 19.5 \\ 
  FS203 & Ngwathe & Free State & 7,055 & 118,907 & 16.9 \\ 
  FS204 & Metsimaholo & Free State & 1,717 & 163,564 & 95.3 \\ 
  FS205 & Mafube & Free State & 3,971 & 57,574 & 14.5 \\ 
  GT481 & Mogale City & Gauteng & 1,342 & 383,864 & 286 \\ 
  GT485 & Rand West City & Gauteng & 1,115 & 265,887 & 238.5 \\ 
  GT484 & Merafong City & Gauteng & 1,631 & 188,843 & 115.8 \\ 
  TSH & City of Tshwane & Gauteng & 6,298 & 3,275,152 & 520 \\ 
  KZN212 & Umdoni & KwaZulu-Natal & 994 & 144,551 & 145.5 \\ 
  KZN214 & uMuziwabantu & KwaZulu-Natal & 1,089 & 108,576 & 99.7 \\ 
  KZN216 & Ray Nkonyeni & KwaZulu-Natal & 1,487 & 348,533 & 234.4 \\ 
  KZN238 & Alfred Duma & KwaZulu-Natal & 3,764 & 356,274 & 94.6 \\ 
  KZN237 & Inkosi Langalibalele & KwaZulu-Natal & 3,399 & 215,182 & 63.3 \\ 
  KZN235 & Okhahlamba & KwaZulu-Natal & 3,971 & 135,132 & 34 \\ 
  KZN241 & Endumeni & KwaZulu-Natal & 1,610 & 76,639 & 47.6 \\ 
  KZN242 & Nqutu & KwaZulu-Natal & 1,962 & 171,325 & 87.3 \\ 
  KZN252 & Newcastle & KwaZulu-Natal & 1,856 & 389,117 & 209.7 \\ 
  KZN253 & Emadlangeni & KwaZulu-Natal & 3,539 & 36,869 & 10.4 \\ 
  KZN254 & Dannhauser & KwaZulu-Natal & 1,707 & 105,341 & 61.7 \\ 
  KZN261 & eDumbe & KwaZulu-Natal & 1,943 & 89,614 & 46.1 \\ 
  KZN262 & uPhongolo & KwaZulu-Natal & 3,110 & 141,247 & 45.4 \\ 
  KZN263 & Abaqulusi & KwaZulu-Natal & 4,314 & 243,795 & 56.5 \\ 
  KZN265 & Nongoma & KwaZulu-Natal & 2,182 & 211,892 & 97.1 \\ 
  KZN266 & Ulundi & KwaZulu-Natal & 3,250 & 205,762 & 63.3 \\ 
  KZN271 & Umhlabuyalingana & KwaZulu-Natal & 4,977 & 172,077 & 34.6 \\ 
  KZN272 & Jozini & KwaZulu-Natal & 3,442 & 198,215 & 57.6 \\ 
  KZN275 & Mtubatuba & KwaZulu-Natal & 1,970 & 202,176 & 102.6 \\ 
  KZN284 & uMlalazi & KwaZulu-Natal & 2,214 & 223,140 & 100.8 \\ 
  KZN286 & Nkandla & KwaZulu-Natal & 1,828 & 114,284 & 62.5 \\ 
  KZN291 & Mandeni & KwaZulu-Natal & 545 & 147,808 & 271 \\ 
  KZN292 & KwaDukuza & KwaZulu-Natal & 735 & 276,719 & 376.5 \\ 
  KZN293 & Ndwedwe & KwaZulu-Natal & 1,093 & 143,117 & 131 \\ 
  KZN294 & Maphumulo & KwaZulu-Natal & 896 & 89,969 & 100.4 \\ 
  KZN436 & Dr Nkosazana Dlamini Zuma & KwaZulu-Natal & 3,602 & 118,480 & 32.9 \\ 
  KZN434 & Ubuhlebezwe & KwaZulu-Natal & 1,669 & 118,346 & 70.9 \\ 
  LIM331 & Greater Giyani & Limpopo & 4,172 & 256,127 & 61.4 \\ 
  LIM332 & Greater Letaba & Limpopo & 1,891 & 218,030 & 115.3 \\ 
  LIM333 & Greater Tzaneen & Limpopo & 2,897 & 416,146 & 143.7 \\ 
  LIM334 & Ba-Phalaborwa & Limpopo & 7,489 & 168,937 & 22.6 \\ 
  LIM335 & Maruleng & Limpopo & 3,563 & 99,946 & 28.1 \\ 
  LIM355 & Lepele-Nkumpi & Limpopo & 3,484 & 235,380 & 67.6 \\ 
  LIM361 & Thabazimbi & Limpopo & 11,190 & 96,232 & 8.6 \\ 
  LIM362 & Lephalale & Limpopo & 13,794 & 140,240 & 10.2 \\ 
  LIM366 & Bela-Bela & Limpopo & 3,406 & 76,296 & 22.4 \\ 
  LIM367 & Mogalakwena & Limpopo & 6,156 & 325,291 & 52.8 \\ 
  LIM471 & Ephraim Mogale & Limpopo & 2,011 & 127,168 & 63.2 \\ 
  LIM472 & Elias Motsoaledi & Limpopo & 3,713 & 268,256 & 72.2 \\ 
  LIM473 & Makhuduthamaga & Limpopo & 2,110 & 284,435 & 134.8 \\ 
  MP301 & Chief Albert Luthuli & Mpumalanga & 5,559 & 187,629 & 33.7 \\ 
  MP302 & Msukaligwa & Mpumalanga & 6,016 & 164,608 & 27.4 \\ 
  MP303 & Mkhondo & Mpumalanga & 4,882 & 189,036 & 38.7 \\ 
  MP304 & Dr Pixley Ka Isaka Seme & Mpumalanga & 5,227 & 85,395 & 16.3 \\ 
  MP305 & Lekwa & Mpumalanga & 4,557 & 123,419 & 27.1 \\ 
  MP306 & Dipaleseng & Mpumalanga & 2,645 & 45,232 & 17.1 \\ 
  MP307 & Govan Mbeki & Mpumalanga & 2,955 & 340,091 & 115.1 \\ 
  MP311 & Victor Khanye & Mpumalanga & 1,568 & 84,151 & 53.7 \\ 
  MP312 & Emalahleni & Mpumalanga & 2,678 & 455,228 & 170 \\ 
  MP313 & Steve Tshwete & Mpumalanga & 3,976 & 278,749 & 70.1 \\ 
  MP314 & Emakhazeni & Mpumalanga & 4,736 & 48,149 & 10.2 \\ 
  MP315 & Thembisile & Mpumalanga & 2,384 & 333,331 & 139.8 \\ 
  MP316 & Dr JS Moroka & Mpumalanga & 1,416 & 246,016 & 173.7 \\ 
  MP321 & Thaba Chweu & Mpumalanga & 5,719 & 101,895 & 17.8 \\ 
  MP324 & Nkomazi & Mpumalanga & 4,787 & 410,907 & 85.8 \\ 
  NW371 & Moretele & North West & 1,498 & 191,306 & 127.7 \\ 
  NW372 & Local Municipality of Madibeng & North West & 3,720 & 536,110 & 144.1 \\ 
  NW373 & Rustenburg & North West & 3,416 & 626,522 & 183.4 \\ 
  NW374 & Kgetlengrivier & North West & 3,973 & 59,562 & 15 \\ 
  NW375 & Moses Kotane & North West & 5,726 & 243,648 & 42.5 \\ 
  NW381 & Ratlou & North West & 4,884 & 106,108 & 21.7 \\ 
  NW382 & Tswaing & North West & 5,875 & 129,052 & 22 \\ 
  NW383 & Mafikeng & North West & 3,646 & 314,394 & 86.2 \\ 
  NW384 & Ditsobotla & North West & 6,387 & 181,865 & 28.5 \\ 
  NW385 & Ramotshere Moiloa & North West & 7,323 & 157,690 & 21.5 \\ 
  NW392 & Naledi & North West & 7,030 & 68,803 & 9.8 \\ 
  NW393 & Mamusa & North West & 3,614 & 64,000 & 17.7 \\ 
  NW394 & Greater Taung & North West & 5,639 & 167,827 & 29.8 \\ 
  NW396 & Lekwa-Teemane & North West & 3,654 & 56,025 & 15.3 \\ 
  NW397 & Kagisano/Molopo & North West & 23,827 & 102,703 & 4.3 \\ 
  NW403 & City of Matlosana & North West & 3,602 & 417,282 & 115.8 \\ 
  NW404 & Maquassi Hills & North West & 4,671 & 82,012 & 17.6 \\ 
  NC061 & Richtersveld & Northern Cape & 9,608 & 12,487 & 1.3 \\ 
  NC062 & Nama Khoi & Northern Cape & 17,990 & 46,512 & 2.6 \\ 
  NC064 & Kamiesberg & Northern Cape & 14,208 & 9,605 & 0.7 \\ 
  NC065 & Hantam & Northern Cape & 39,085 & 21,540 & 0.6 \\ 
  NC066 & Karoo Hoogland & Northern Cape & 30,230 & 13,009 & 0.4 \\ 
  NC067 & Kh‰i-Ma & Northern Cape & 15,715 & 12,333 & 0.8 \\ 
  NC071 & Ubuntu & Northern Cape & 20,393 & 19,471 & 1 \\ 
  NC072 & Umsobomvu & Northern Cape & 6,813 & 30,883 & 4.5 \\ 
  NC073 & Emthanjeni & Northern Cape & 13,472 & 45,404 & 3.4 \\ 
  NC074 & Kareeberg & Northern Cape & 17,701 & 12,772 & 0.7 \\ 
  NC075 & Renosterberg & Northern Cape & 5,529 & 11,818 & 2.1 \\ 
  NC076 & Thembelihle & Northern Cape & 8,023 & 16,230 & 2 \\ 
  NC077 & Siyathemba & Northern Cape & 14,727 & 23,075 & 1.6 \\ 
  NC078 & Siyancuma & Northern Cape & 16,753 & 35,941 & 2.1 \\ 
  NC082 & Kai !Garib & Northern Cape & 26,377 & 68,929 & 2.6 \\ 
  NC084 & !Kheis & Northern Cape & 11,107 & 16,566 & 1.5 \\ 
  NC085 & Tsantsabane & Northern Cape & 18,290 & 39,345 & 2.2 \\ 
  NC086 & Kgatelopele & Northern Cape & 2,478 & 20,691 & 8.3 \\ 
  NC091 & Sol Plaatjie & Northern Cape & 3,145 & 255,041 & 81.1 \\ 
  NC092 & Dikgatlong & Northern Cape & 7,316 & 48,473 & 6.6 \\ 
  NC453 & Gamagara & Northern Cape & 2,648 & 53,656 & 20.3 \\ 
  WC011 & Matzikama & Western Cape & 12,981 & 71,045 & 5.5 \\ 
  WC012 & Cederberg & Western Cape & 8,007 & 52,949 & 6.6 \\ 
  WC013 & Bergrivier & Western Cape & 4,407 & 67,474 & 15.3 \\ 
  WC022 & Witzenberg & Western Cape & 10,753 & 130,548 & 12.1 \\ 
  WC023 & Drakenstein & Western Cape & 1,538 & 280,195 & 182.2 \\ 
  WC025 & Breede Valley & Western Cape & 3,834 & 176,578 & 46.1 \\ 
  WC026 & Langeberg & Western Cape & 4,518 & 105,483 & 23.3 \\ 
  WC033 & Cape Agulhas & Western Cape & 3,471 & 36,000 & 10.4 \\ 
  WC034 & Swellendam & Western Cape & 3,835 & 40,211 & 10.5 \\ 
  WC041 & Kannaland & Western Cape & 4,765 & 24,168 & 5.1 \\ 
  WC042 & Hessequa & Western Cape & 5,733 & 54,237 & 9.5 \\ 
  WC043 & Mossel Bay & Western Cape & 2,001 & 94,135 & 47 \\ 
  WC044 & George & Western Cape & 5,191 & 208,237 & 40.1 \\ 
  WC045 & Oudtshoorn & Western Cape & 3,540 & 97,509 & 27.5 \\ 
  WC047 & Bitou & Western Cape & 992 & 59,157 & 59.6 \\ 
  WC048 & Knysna & Western Cape & 1,109 & 73,835 & 66.6 \\ 
  WC051 & Laingsburg & Western Cape & 8,784 & 8,895 & 1 \\ 
  WC052 & Prince Albert & Western Cape & 8,153 & 14,272 & 1.8 \\ 
  WC053 & Beaufort West & Western Cape & 21,917 & 51,080 & 2.3 \\ 
  NC451 & Joe Morolong & Northern Cape & 20,180 & 84,201 & 4.2 \\ 
  NC452 & Ga-Segonyana & Northern Cape & 4,495 & 104,408 & 23.2 \\ 
  KZN213 & Umzumbe & KwaZulu-Natal & 1,221 & 151,676 & 124.2 \\ 
  KZN276 & Big Five Hlabisa & KwaZulu-Natal & 3,466 & 116,622 & 33.6 \\ 
  KZN227 & Richmond & KwaZulu-Natal & 1,231 & 71,322 & 57.9 \\ 
  KZN433 & Greater Kokstad & KwaZulu-Natal & 2,680 & 76,753 & 28.6 \\ 
  KZN435 & Umzimkhulu & KwaZulu-Natal & 2,436 & 197,286 & 81 \\ 
  NC093 & Magareng & Northern Cape & 1,546 & 24,059 & 15.6 \\ 
  NC094 & Phokwane & Northern Cape & 828 & 60,168 & 72.7 \\ 
  WC024 & Stellenbosch & Western Cape & 831 & 173,197 & 208.4 \\ 
  WC031 & Theewaterskloof & Western Cape & 3,259 & 117,167 & 36 \\ 
  EC442 & Umzimvubu & Eastern Cape & 2,579 & 199,620 & 77.4 \\ 
  EC444 & Ntabankulu & Eastern Cape & 1,385 & 128,848 & 93.1 \\ 
  EC443 & Mbizana & Eastern Cape & 2,415 & 319,948 & 132.5 \\ 
  EC123 & Great Kei & Eastern Cape & 1,700 & 31,692 & 18.6 \\ 
  EC124 & Amahlathi & Eastern Cape & 4,505 & 101,826 & 22.6 \\ 
  KZN221 & uMshwathi & KwaZulu-Natal & 1,866 & 111,645 & 59.8 \\ 
  KZN244 & Msinga & KwaZulu-Natal & 2,375 & 184,494 & 77.7 \\ 
  ETH & eThekwini & KwaZulu-Natal & 2,556 & 3,702,231 & 1,448.50 \\ 
  KZN226 & Mkhambathini & KwaZulu-Natal & 868 & 57,075 & 65.7 \\ 
  KZN225 & The Msunduzi & KwaZulu-Natal & 751 & 679,039 & 904.1 \\ 
  KZN222 & uMngeni & KwaZulu-Natal & 1,520 & 109,867 & 72.3 \\ 
  KZN224 & Impendle & KwaZulu-Natal & 1,610 & 29,526 & 18.3 \\ 
  KZN281 & Mfolozi & KwaZulu-Natal & 1,300 & 144,363 & 111.1 \\ 
  KZN282 & uMhlathuze & KwaZulu-Natal & 1,233 & 410,465 & 332.8 \\ 
  KZN285 & Mthonjaneni & KwaZulu-Natal & 1,639 & 78,883 & 48.1 \\ 
  EC106 & Sundays River Valley & Eastern Cape & 5,995 & 59,793 & 10 \\ 
  EC108 & Kouga & Eastern Cape & 2,670 & 112,941 & 42.3 \\ 
  EC109 & Kou-Kamma & Eastern Cape & 3,642 & 43,688 & 12 \\ 
  NMA & Nelson Mandela Bay & Eastern Cape & 1,957 & 1,263,051 & 645.4 \\ 
  BUF & Buffalo City & Eastern Cape & 2,750 & 834,997 & 303.6 \\ 
  EC126 & Ngqushwa & Eastern Cape & 2,115 & 63,694 & 30.1 \\ 
  MP325 & Bushbuckridge & Mpumalanga & 10,248 & 546,215 & 53.3 \\ 
  EC135 & Intsika Yethu & Eastern Cape & 2,873 & 152,159 & 53 \\ 
  EC136 & Emalahleni & Eastern Cape & 3,484 & 124,532 & 35.7 \\ 
  EC138 & Sakhisizwe & Eastern Cape & 2,318 & 63,846 & 27.5 \\ 
  WC014 & Saldanha Bay & Western Cape & 2,015 & 111,173 & 55.2 \\ 
  WC015 & Swartland & Western Cape & 3,707 & 133,762 & 36.1 \\ 
  WC032 & Overstrand & Western Cape & 1,675 & 93,407 & 55.8 \\ 
  CPT & City of Cape Town & Western Cape & 2,446 & 4,005,016 & 1,637.60 \\ 
  LIM351 & Blouberg & Limpopo & 9,540 & 172,601 & 18.1 \\ 
  LIM353 & Molemole & Limpopo & 3,628 & 125,327 & 34.5 \\ 
  LIM354 & Polokwane & Limpopo & 5,054 & 797,127 & 157.7 \\ 
  LIM368 & Modimolle/Mookgophong & Limpopo & 10,367 & 107,699 & 10.4 \\ 
  LIM476 & Greater Tubatse/Fetakgomo & Limpopo & 5,693 & 489,902 & 86 \\ 
  LIM341 & Musina & Limpopo & 10,347 & 132,009 & 12.8 \\ 
  LIM343 & Thulamela & Limpopo & 2,642 & 497,237 & 188.2 \\ 
  LIM344 & Makhado & Limpopo & 7,605 & 416,728 & 54.8 \\ 
  LIM345 & New & Limpopo & 5,003 & 347,974 & 69.6 \\ 
  NC087 & Dawid Kruiper & Northern Cape & 44,231 & 107,161 & 2.4 \\ 
  MP326 & Mbombela & Mpumalanga & 7,141 & 695,913 & 97.4 \\ 
  NW405 & Ventersdorp/Tlokwe & North West & 6,398 & 243,527 & 38.1 \\ 
  MAN & Mangaung & Free State & 9,886 & 787,803 & 79.7 \\ 
  EC145 & Walter Sisulu & Eastern Cape & 13,269 & 87,263 & 6.6 \\ 
  EC139 & Enoch Mgijima & Eastern Cape & 13,584 & 267,011 & 19.7 \\ 
  EC129 & Raymond Mhlaba & Eastern Cape & 6,357 & 159,515 & 25.1 \\ 
  KZN245 & Umvoti & KwaZulu-Natal & 2,705 & 122,423 & 45.3 \\ 
  KZN223 & Mpofana & KwaZulu-Natal & 1,757 & 37,391 & 21.3 \\ 
  GT423 & Lesedi & Gauteng & 1,484 & 112,472 & 75.8 \\ 
  GT422 & Midvaal & Gauteng & 1,722 & 111,612 & 64.8 \\ 
  GT421 & Emfuleni & Gauteng & 966 & 733,445 & 759.3 \\ 
  EC102 & Blue Crane Route & Eastern Cape & 11,068 & 36,063 & 3.3 \\ 
  JHB & City of Johannesburg & Gauteng & 1,645 & 4,949,347 & 3,008.80 \\ 
  EKU & Ekurhuleni & Gauteng & 1,975 & 3,379,104 & 1,710.60 \\ 
   \hline
\end{longtable}
}


\begin{thebibliography}{10}

\bibitem{agresti-1990}
A.~Agresti.
\newblock {\em Categorical Data Analysis}.
\newblock John Wiley \& Sons, New York, 1990.

\bibitem{albert2002statistical}
R{\'e}ka Albert and Albert-L{\'a}szl{\'o} Barab{\'a}si.
\newblock Statistical mechanics of complex networks.
\newblock {\em Reviews of modern physics}, 74(1):47, 2002.

\bibitem{baldi2000assessing}
Pierre Baldi, S{\o}ren Brunak, Yves Chauvin, Claus~AF Andersen, and Henrik
  Nielsen.
\newblock Assessing the accuracy of prediction algorithms for classification:
  an overview.
\newblock {\em Bioinformatics}, 16(5):412--424, 2000.

\bibitem{Baltazar2015}
Cynthia~Sem{\'a} Baltazar, Roberta Horth, Celso Inguane, Isabel Sathane, Freide
  C{\'e}sar, Helena Ricardo, Carlos Bot{\~a}o, {\^A}ngelo Augusto, Laura
  Cooley, Beverly Cummings, Henry~F. Raymond, and Peter~W. Young.
\newblock {H}{I}{V} prevalence and risk behaviors among {M}ozambicans working
  in {S}outh {A}frican mines.
\newblock {\em AIDS and Behavior}, 19:59--67, 2015.

\bibitem{becker-et-al-2013}
R.~Becker, R.~C\'{a}ceres, K.~Hanson, S.~Isaacman, J.~M. Loh, M.~Martonosi,
  J.~Rowland, S.~Urbanek, A.~Varshavsky, and C.~Volinsky.
\newblock Human mobility characterization from cellular network data.
\newblock {\em Communications of the ACM}, 56:74--82, 2013.

\bibitem{besag1975statistical}
J.~Besag.
\newblock Statistical analysis of non-lattice data.
\newblock {\em Journal of the Royal Statistical Society. Series D (The
  Statistician)}, 24:179--195, 1975.

\bibitem{besag1977efficiency}
J.~Besag.
\newblock Efficiency of pseudolikelihood estimation for simple gaussian fields.
\newblock {\em Biometrika}, 64:616--618, 1977.

\bibitem{bhattacharya-dunson-2012}
A.~Bhattacharya and D.~Dunson.
\newblock Simplex factor models for multivariate unordered categorical data.
\newblock {\em Journal of the American Statistical Association}, 107:362--377,
  2012.

\bibitem{bishop-et-1975}
Y.~M.~M. Bishop, S.~E. Fienberg, and P.~W. Holland.
\newblock {\em Discrete Multivariate Analysis: Theory and Practice}.
\newblock MIT Press, Cambridge, MA, 1975.
\newblock Reprinted 2007, Springer-Verlag, New York.

\bibitem{Brockmann2006}
D.~Brockmann, L.~Hufnagel, and T.~Geisel.
\newblock {The scaling laws of human travel}.
\newblock {\em Nature}, 439(7075):462--465, jan 2006.

\bibitem{calabrese-et-al-2013}
F.~Calabrese, M.~Diao, G.~D. Lorenzo, J.~Ferreira~Jr., and C.~Ratti.
\newblock Understanding individual mobility patterns from urban sensing data: a
  mobile phone trace example.
\newblock {\em Transportation Research Part C: Emerging Technologies},
  26:301--313, 2013.

\bibitem{canale-dunson-2011}
A.~Canale and D.~Dunson.
\newblock Bayesian kernel mixtures of counts.
\newblock {\em Journal of the American Statistical Association},
  106:1528--1539, 2011.

\bibitem{cappe2003reversible}
O.~Capp{\'e}, C.P. Robert, and T.~Ryd{\'e}n.
\newblock Reversible jump, birth-and-death and more general continuous time
  {M}arkov chain {M}onte {C}arlo samplers.
\newblock {\em Journal of the Royal Statistical Society: Series B (Statistical
  Methodology)}, 65(3):679--700, 2003.

\bibitem{cheng2012hierarchical}
Yuan Cheng, Alex Lenkoski, et~al.
\newblock Hierarchical {G}aussian graphical models: Beyond reversible jump.
\newblock {\em Electronic Journal of Statistics}, 6:2309--2331, 2012.

\bibitem{clydegeorge2004}
M.~Clyde and E.~I. George.
\newblock Model uncertainty.
\newblock {\em Statistical Science}, 19:81--94, 2004.

\bibitem{dellaportasforster1999}
P.~Dellaportas and J.~J. Forster.
\newblock Markov chain {M}onte {C}arlo model determination for hierarchical and
  graphical log-linear models.
\newblock {\em Biometrika}, 86:615--633, 1999.

\bibitem{dellaportastarantola2005}
P.~Dellaportas and C.~Tarantola.
\newblock Model determination for categorical data with factor level merging.
\newblock {\em Journal of the Royal Statistical Society: Series B (Statistical
  Methodology)}, 67:269--283, 2005.

\bibitem{descombes2009object}
Xavier Descombes, Robert Minlos, and Elena Zhizhina.
\newblock Object extraction using a stochastic birth-and-death dynamics in
  continuum.
\newblock {\em Journal of Mathematical Imaging and Vision}, 33(3):347--359,
  2009.

\bibitem{dobra-et-2017}
A.~Dobra, T.~Baernighausen, A.~Vandormael, and F.~Tanser.
\newblock Space-time migration patterns and risk of {H}{I}{V} acquisition in
  rural {S}outh {A}frica.
\newblock {\em AIDS}, 31:37--145, 2017.

\bibitem{dobra-lenkoski-2011}
A.~Dobra and A.~Lenkoski.
\newblock Copula {G}aussian graphical models and their application to modeling
  functional disability data.
\newblock {\em Annals of Applied Statistics}, 5:969--993, 2011.

\bibitem{dobra-et-2011}
A.~Dobra, A.~Lenkoski, and A.~Rodriguez.
\newblock Bayesian inference for general {G}aussian graphical models with
  application to multivariate lattice data.
\newblock {\em Journal of the American Statistical Association},
  106:1418--1433, 2011.

\bibitem{dobra-massam-2010}
A.~Dobra and H.~Massam.
\newblock The mode oriented stochastic search ({M}{O}{S}{S}) algorithm for
  log-linear models with conjugate priors.
\newblock {\em Statistical Methodology}, 7:240--253, 2010.

\bibitem{dobra-et-2015}
A.~Dobra, N.~E. Williams, and N.~Eagle.
\newblock Spatiotemporal detection of unusual human population behavior using
  mobile phone data.
\newblock {\em PLOS ONE}, 10:1--20, 2015.

\bibitem{Donato-1993}
K.~Donato.
\newblock Current trends and patterns of female migration: Evidence from
  {M}exico.
\newblock {\em International Migration Review}, 27:748--771, 1993.

\bibitem{drton-maathuis-2017}
M.~Drton and M.~H. Maathuis.
\newblock Structure learning in graphical modeling.
\newblock {\em The Annual Review of Statistics and Its Application},
  4:365--393, 2017.

\bibitem{dunson-xing-2009}
D.~B. Dunson and C.~Xing.
\newblock Nonparametric {B}ayes modeling of multivariate categorical data.
\newblock {\em Journal of the American Statistical Association}, 104:1042 --
  1051, 2009.

\bibitem{Durand-et-al-1996}
J.~Durand, W.~Kandel, E.~A. Parrado, and D.~S. Massey.
\newblock International migration and development in {M}exican communities.
\newblock {\em Demography}, 33:249--264, 1996.

\bibitem{edwardshavranek1985}
D.~E. Edwards and T.~Havranek.
\newblock A fast procedure for model search in multidimensional contingency
  tables.
\newblock {\em Biometrika}, 72:339--351, 1985.

\bibitem{eldin2012multiple}
Ahmed~Gamal Eldin, Xavier Descombes, Guillaume Charpiat, and Josiane Zerubia.
\newblock Multiple birth and cut algorithm for multiple object detection.
\newblock {\em Journal of Multimedia Processing and Technologies}, 2012.

\bibitem{fienberg-1970}
S.~E. Fienberg.
\newblock The analysis of multidimensional contingency tables.
\newblock {\em Ecology}, 51:419--433, 1970.

\bibitem{fienberg-1980}
S.~E. Fienberg.
\newblock {\em The Analysis of Cross-Classified Categorical Data}.
\newblock MIT Press, Cambridge, MA, 1980.
\newblock Reprinted 2007, Springer-Verlag, New York.

\bibitem{fienbergrinaldo2007}
S.~E. Fienberg and A.~Rinaldo.
\newblock Three centuries of categorical data analysis: log-linear models and
  maximum likelihood estimation.
\newblock {\em Journal of Statistical Planning and Inference}, 137:3430--3445,
  2007.

\bibitem{fienberg-rinaldo-2012}
S.~E. Fienberg and A.~Rinaldo.
\newblock Maximum likelihood estimation in log-linear models.
\newblock {\em Annals of Statistics}, 40:996--1023, 2012.

\bibitem{gamal2011fast}
Ahmed Gamal-Eldin, Xavier Descombes, Guillaume Charpiat, and Josiane Zerubia.
\newblock A fast multiple birth and cut algorithm using belief propagation.
\newblock In {\em 2011 18th IEEE International Conference on Image Processing},
  pages 2813--2816. IEEE, 2011.

\bibitem{gamal2010multiple}
Ahmed Gamal-Eldin, Xavier Descombes, and Josiane Zerubia.
\newblock Multiple birth and cut algorithm for point process optimization.
\newblock In {\em Signal-Image Technology and Internet-Based Systems (SITIS),
  2010 Sixth International Conference on}, pages 35--42. IEEE, 2010.

\bibitem{Gonzalez2008}
M.~C. Gonzalez, C.~A. Hidalgo, and A.-L. Barabasi.
\newblock {Understanding individual human mobility patterns}.
\newblock {\em Nature}, 453(7196):779--782, jun 2008.

\bibitem{green1995reversible}
P.~J. Green.
\newblock Reversible jump {M}arkov chain {M}onte {C}arlo computation and
  {B}ayesian model determination.
\newblock {\em Biometrika}, 82:711--732, 1995.

\bibitem{Guerzhoy2014}
M.~Guerzhoy and A.~Hertzmann.
\newblock Learning latent factor models of travel data for travel prediction
  and analysis.
\newblock In M.~Sokolova and P.~van Beek, editors, {\em Advances in Artificial
  Intelligence: 27th Canadian Conference on Artificial Intelligence, Canadian
  AI 2014, Montr{\'e}al, QC, Canada, May 6-9, 2014. Proceedings}, pages
  131--142. Springer International Publishing, Cham, 2014.

\bibitem{Harris-Todaro-1970}
J.~R. Harris and M.~P. Todaro.
\newblock Migration, unemployment and development: A two-sector analysis.
\newblock {\em American Economic Review}, 60:126--142, 1970.

\bibitem{hoefling-tibshirani-2009}
H.~H\"{o}efling and R.~Tibshirani.
\newblock Estimation of sparse binary pairwise markov networks with
  pseudo-likelihoods.
\newblock {\em Journal of Machine Learning Research}, 10:883--906, 2009.

\bibitem{Hoff2008}
P.~D. Hoff.
\newblock Multiplicative latent factor models for description andÊprediction of
  social networks.
\newblock {\em Computational and Mathematical Organization Theory}, 15(4):261,
  2008.

\bibitem{imai-2017}
K.~Imai.
\newblock {\em Quantitative Social Science: An Introduction}.
\newblock Princeton University Press, 2017.

\bibitem{jones2005experiments}
B.~Jones, C.~Carvalho, A.~Dobra, C.~Hans, C.~Carter, and M.~West.
\newblock Experiments in stochastic computation for high-dimensional graphical
  models.
\newblock {\em Statistical Science}, 20:388--400, 2005.

\bibitem{10.1371/journal.pone.0131469}
R.~Jurdak, K.~Zhao, J.~Liu, M.~AbouJaoude, M.~Cameron, and D.~Newth.
\newblock Understanding human mobility from {T}witter.
\newblock {\em PLOS ONE}, 10(7):1--16, 07 2015.

\bibitem{kunihama-dunson-2013}
T.~Kunihama and D.~B. Dunson.
\newblock Bayesian modeling of temporal dependence in large sparse contingency
  tables.
\newblock {\em Journal of the American Statistical Association},
  108:1324--1338, 2013.

\bibitem{lauritzen1996graphical}
S.L. Lauritzen.
\newblock {\em Graphical models}, volume~17.
\newblock Oxford University Press, USA, 1996.

\bibitem{leetaru-et-2013}
K.~Leetaru, S.~Wang, G.~Cao, A.~Padmanabhan, and E.~Shook.
\newblock Mapping the global {T}witter heartbeat: The geography of {T}witter.
\newblock {\em First Monday}, 18(5):4366, 2013.

\bibitem{lenkoski2011computational}
A.~Lenkoski and A.~Dobra.
\newblock Computational aspects related to inference in {G}aussian graphical
  models with the {G}-{W}ishart prior.
\newblock {\em Journal of Computational and Graphical Statistics}, 20:140--157,
  2011.

\bibitem{madiganraftery1994}
D.~Madigan and A.E. Raftery.
\newblock Model selection and accounting for model uncertainty in graphical
  models using {O}ccam's window.
\newblock {\em Journal of the American Statistical Association}, 89:1535--1546,
  1994.

\bibitem{madiganyork1995}
D.~Madigan and J.~York.
\newblock Bayesian graphical models for discrete data.
\newblock {\em International Statistical Review}, 63:215--232, 1995.

\bibitem{madiganyork1997}
D.~Madigan and J.~York.
\newblock Bayesian methods for estimation of the size of a closed population.
\newblock {\em Biometrika}, 84:19--31, 1997.

\bibitem{madigan1996bayesian}
David Madigan, Adrian~E Raftery, C~Volinsky, and J~Hoeting.
\newblock Bayesian model averaging.
\newblock In {\em Proceedings of the AAAI Workshop on Integrating Multiple
  Learned Models, Portland, OR}, pages 77--83, 1996.

\bibitem{massamliudobra2009}
H.~Massam, J.~Liu, and A.~Dobra.
\newblock A conjugate prior for discrete hierarchical log-linear models.
\newblock {\em Annals of Statistics}, 37:3431--3467, 2009.

\bibitem{Massey-1990}
D.~S. Massey.
\newblock Social structure, household strategies, and the cumulative causation
  of migration.
\newblock {\em Population Index}, 56:3--26, 1990.

\bibitem{Massey-et-al-1993}
D.~S. Massey, J.~Arango, G.~Hugo, A.~Kouaouci, A.~Pellegrino, and J.~E. Taylor.
\newblock Theories of international migration: A review and appraisal.
\newblock {\em Population and Development Review}, 19:431--466, 1993.

\bibitem{Massey-Espinosa-1997}
D.~S. Massey and K.~E. Espinosa.
\newblock What's driving {M}exico-{U}.{S}. migration? {A} theoretical,
  empirical, and policy analysis.
\newblock {\em American Journal of Sociology}, 102:939--999, 1997.

\bibitem{Massey-et-al-2010}
D.~S. Massey, N.~Williams, W.~G. Axinn, and D.~Ghimire.
\newblock Community services and out-migration.
\newblock {\em International Migration}, 48:1--41, 2010.

\bibitem{mohammadi2017ratio}
A~Mohammadi, H~Massam, and G~Letac.
\newblock The ratio of normalizing constants for bayesian graphical gaussian
  model selection.
\newblock {\em arXiv preprint arXiv:1706.04416}, 2017.

\bibitem{mohammadi2015bdgraph}
A.~Mohammadi and Ernst~C Wit.
\newblock {B}{D}graph: {B}ayesian structure learning of graphs in {R}.
\newblock {\em arXiv preprint arXiv:1501.05108v2}, 2015.

\bibitem{mohammadi2017bayesian}
Abdolreza Mohammadi, Fentaw Abegaz, Edwin Heuvel, and Ernst~C Wit.
\newblock Bayesian modelling of dupuytren disease by using gaussian copula
  graphical models.
\newblock {\em Journal of the Royal Statistical Society: Series C (Applied
  Statistics)}, 66(3):629--645, 2017.

\bibitem{mohammadi2015bayesianStructure}
Abdolreza Mohammadi and E.~C. Wit.
\newblock Bayesian structure learning in sparse {G}aussian graphical models.
\newblock {\em Bayesian Analysis}, 10(1):109--138, 2015.

\bibitem{BDgraph}
Abdolreza Mohammadi and Ernst Wit.
\newblock {\em {B}{D}graph: {B}ayesian Structure Learning in Graphical Models
  using Birth-Death {M}{C}{M}{C}}, 2017.
\newblock R package version 2.42.

\bibitem{nardi-rinaldo-2012}
Y.~Nardi and A.~Rinaldo.
\newblock The log-linear group lasso estimator for hierarchical log-linear
  models and its asymptotic properties.
\newblock {\em Bernoulli}, 18:945--974, 2012.

\bibitem{Neubauer2015}
G.~Neubauer, H.~Huber, A.~Vogl, B.~Jager, A.~Preinerstorfer, S.~Schirnhofer,
  G.~Schimak, and D.~Havlik.
\newblock On the volume of geo-referenced tweets and their relationship to
  events relevant for migration tracking.
\newblock In R.~Denzer, R.~M. Argent, G.~Schimak, and
  J.~H{\v{r}}eb{\'i}{\v{c}}ek, editors, {\em Environmental Software Systems.
  Infrastructures, Services and Applications: 11th IFIP WG 5.11 International
  Symposium, ISESS 2015, Melbourne, VIC, Australia, March 25-27, 2015.
  Proceedings}, pages 520--530. Springer International Publishing, Cham, 2015.

\bibitem{openmp08}
{OpenMP Architecture Review Board}.
\newblock {OpenMP} application program interface version 3.0, May 2008.

\bibitem{OpenStreetMap}
{OpenStreetMap~Contributors}.
\newblock {Planet dump retrieved from \url{http://planet.osm.org}}.
\newblock \url{http://www.openstreetmap.org}, 2017.

\bibitem{pensar2016marginal}
J.~Pensar, H.~Nyman, J.~Niiranen, and J.~Corander.
\newblock Marginal pseudo-likelihood learning of discrete {M}arkov network
  structures.
\newblock {\em Bayesian Analysis}, 2017.

\bibitem{preston1976}
C.~J. Preston.
\newblock Spatial birth-and-death processes.
\newblock {\em Bulletin of the International Statistical Institute},
  46:371--391, 1977.

\bibitem{ravikumar-et-2010}
P.~Ravikumar, M.~Wainwright, and J.~Lafferty.
\newblock High-dimensional {I}sing model selection with $l_{1}$-regularized
  logistic regression.
\newblock {\em Annals of Statistics}, 38:1278--1319, 2010.

\bibitem{raymer-et-2007}
J.~Raymer, G.~Abel, and P.~W.~F. Smith.
\newblock Combining census and registration data to estimate detailed elderly
  migration flows in {E}ngland and {W}ales.
\newblock {\em Journal of the Royal Statistical Society: Series A (Statistics
  in Society)}, 170:891--908, 2007.

\bibitem{raymer-et-2013}
J.~Raymer, A.~Wi\'{s}niowski, J.~J. Forster, P.~W.~F. Smith, and J.~Bijak.
\newblock Integrated modeling of {E}uropean migration.
\newblock {\em Journal of the American Statistical Association}, 108:801--819,
  2013.

\bibitem{scott2008feature}
James~G Scott and Carlos~M Carvalho.
\newblock Feature-inclusion stochastic search for {G}aussian graphical models.
\newblock {\em Journal of Computational and Graphical Statistics},
  17(4):790--808, 2008.

\bibitem{smappR}
{SMaPP}.
\newblock smapp{R} package: Tools for analysis of {T}witter data,
  {S}ocial~{M}edia~and~{P}articipation,~{N}ew~{Y}ork~{U}niversity.
\newblock Available online at \url{https://github.com/SMAPPNYU/smappR}, 2017.

\bibitem{smith-et-2010}
P.~W.~F. Smith, J.~Raymer, and C.~Giulietti.
\newblock Combining available migration data in {E}ngland to study economic
  activity flows over time.
\newblock {\em Journal of the Royal Statistical Society: Series A (Statistics
  in Society)}, 173:733--753, 2010.

\bibitem{Stark-Bloom-1985}
O.~Stark and D.~E. Bloom.
\newblock The new economics of labor migration.
\newblock {\em American Economic Review}, 75:173--178, 1985.

\bibitem{Stark-Taylor-1991}
O.~Stark and J.~E. Taylor.
\newblock Migration incentives, migration types: The role of relative
  deprivation.
\newblock {\em The Economic Journal}, 101:1163--1178, 1985.

\bibitem{stopher-greaves-2007}
P.~R. Stopher and S.~P. Greaves.
\newblock Household travel surveys: where are we going?
\newblock {\em Transportation Research Part A: Policy and Practice},
  41:367--381, 2007.

\bibitem{tarantola2004}
C.~Tarantola.
\newblock {M}{C}{M}{C} model determination for discrete graphical models.
\newblock {\em Statistical Modelling}, 4:39--61, 2004.

\bibitem{tatem-2014}
A.~J. Tatem.
\newblock Mapping population and pathogen movements.
\newblock {\em International Health}, 6:5--11, 2014.

\bibitem{Taylor-1987}
J.~E. Taylor.
\newblock Undocumented {M}exico-{U}.{S}. migration and the returns to
  households in rural {M}exico.
\newblock {\em American Journal of Agricultural Economics}, 69:616--638, 1987.

\bibitem{Todaro-1969}
M.~P. Todaro.
\newblock A model of labor migration and urban unemployment in less developed
  countries.
\newblock {\em The American Economic Review}, 59:138--148, 1969.

\bibitem{Todaro-Maruszko-1987}
M.~P. Todaro and L.~Maruszko.
\newblock Illegal immigration and {U}.{S}. immigration reform: A conceptual
  framework.
\newblock {\em Population and Development Review}, 13:101--114, 1987.

\bibitem{tsamardinos2006max}
Ioannis Tsamardinos, Laura~E Brown, and Constantin~F Aliferis.
\newblock The max-min hill-climbing bayesian network structure learning
  algorithm.
\newblock {\em Machine learning}, 65(1):31--78, 2006.

\bibitem{TwitterRESTAPIs}
{Twitter,~Inc.}
\newblock Twitter {R}{E}{S}{T} {A}{P}{I}s.
\newblock Available online at \url{https://dev.twitter.com/rest/public}, 2017.

\bibitem{VanWey-2005}
L.~K. Van{W}ey.
\newblock Land ownership as a determinant of international and internal
  migration in {M}exico and internal migration in {T}hailand.
\newblock {\em International Migration Review}, 39:141--172, 2005.

\bibitem{wainwright-jordan-2008}
M.~Wainwright and M.~Jordan.
\newblock Graphical models, exponential families and variational inference.
\newblock {\em Foundations and Trends in Machine Learning}, 1:1--305, 2008.

\bibitem{wang2012efficient}
H.~Wang and S.Z. Li.
\newblock Efficient {G}aussian graphical model determination under {G}-wishart
  prior distributions.
\newblock {\em Electronic Journal of Statistics}, 6:168--198, 2012.

\bibitem{whittaker1990}
J.~Whittaker.
\newblock {\em Graphical Models in Applied Multivariate Statistics}.
\newblock John Wiley $\&$ Sons, 1990.

\bibitem{Williams-2009}
N.~Williams.
\newblock Education, gender, and migration in the context of social change.
\newblock {\em Social Science Research}, 38:883--896, 2009.

\bibitem{williams-et-2015}
N.~E. Williams, T.~A. Thomas, M.~Dunbar, N.~Eagle, and A.~Dobra.
\newblock Measures of human mobility using mobile phone records enhanced with
  gis data.
\newblock {\em PLOS ONE}, 10:1--16, 2015.

\bibitem{wolf-et-2003}
J.~Wolf, M.~Oliveira, and M.~Thompson.
\newblock Impact of underreporting on mileage and travel time estimates:
  Results from global positioning system-enhanced household travel survey.
\newblock {\em Transportation Research Record: Journal of the Transportation
  Research Board}, 1854:189--198, 2003.

\end{thebibliography}

\end{document}